\DeclareFixedFont{\ttb}{T1}{txtt}{bx}{n}{9} 
\DeclareFixedFont{\ttm}{T1}{txtt}{m}{n}{9}  
\definecolor{deepblue}{rgb}{0,0,0.5}
\definecolor{deepred}{rgb}{0.6,0,0}
\definecolor{deepgreen}{rgb}{0,0.5,0}
\newcommand\pythonstyle{\lstset{
language=Python,
basicstyle=\ttm,
otherkeywords={self},             
keywordstyle=\ttb\color{deepblue},
emph={MyClass,__init__},          
emphstyle=\ttb\color{deepred},    
stringstyle=\color{deepgreen},
frame=tb,                         
showstringspaces=false            %
}}
\newcommand\pythoninline[1]{{\pythonstyle\lstinline!#1!}}
\newtheorem{theorem}{Theorem}
\newcommand{\algoname}[0]{Eiffel}
\begin{document}

\title{\Large \bf \algoname{}: Efficient and Flexible Software Packet Scheduling}

\author{ {\rm Ahmed Saeed$^\dagger$, Yimeng Zhao$^\dagger$,  
Nandita Dukkipati$^\ast$, 
Mostafa Ammar$^\dagger$, Ellen Zegura$^\dagger$,} \\ 
{\rm Khaled Harras$^\ddagger$, Amin Vahdat$^\ast$}\\
$^\dagger$Georgia Institute of Technology,
$^\ast$Google,
$^\ddagger$Carnegie Mellon University\\
{\normalfont\textbf{\normalsize  This paper is an extended version of the paper on Eiffel that will be published in USENIX NSDI 2019.}}
}

\maketitle
\thispagestyle{empty}

\subsection*{Abstract}
Packet scheduling determines the ordering of packets in a queuing data structure with respect to some ranking function that is mandated by a scheduling policy. It is the core component in many recent innovations to optimize network performance and utilization. 
Our focus in this paper is on the design and deployment of packet scheduling in software. Software schedulers have several advantages over hardware including shorter development cycle and flexibility in functionality and deployment location. 
We substantially improve current software packet scheduling performance, while maintaining flexibility, by exploiting underlying features of packet ranking; namely, packet ranks are integers and, at any point in time, fall within a limited range of values. 
We introduce Eiffel, a novel programmable packet scheduling system. At the core of Eiffel is an integer priority queue based on the Find First Set (FFS) instruction and designed to support a wide range of policies and ranking functions efficiently. As an even more efficient alternative, we also propose a new approximate priority queue that can outperform FFS-based queues for some scenarios. To support flexibility, Eiffel introduces novel programming abstractions to express scheduling policies that cannot be captured by current, state-of-the-art scheduler programming models. We evaluate Eiffel in a variety of settings and in both kernel and userspace deployments.  We show that it outperforms state of the art systems by 3-40x in terms of either number of cores utilized for network processing or number of flows given fixed processing capacity.

\vspace{-0.1in}
\section{Introduction}
\label{sec:intro}
\vspace{-0.1in}
Packet scheduling is the core component in many recent innovations to optimize network performance and utilization. Typically, packet scheduling targets network-wide objectives (e.g., meeting strict deadlines of flows \cite{pdq}, reducing flow completion time \cite{pfabric}), or provides isolation and differentiation of service (e.g., through bandwidth allocation \cite{bwe, swan} or Type of Service levels \cite{PASE,PIAS,qjump}).  It is also used for resource allocation within the packet processing system (e.g., fair CPU utilization in middleboxes \cite{Wang:2014:FTP:2674005.2675010,DRFQ} and software switches \cite{softnic}).

Packet scheduling determines the ordering of packets in a {\em queuing data structure} with respect to some {\em ranking function} that is mandated by a {\em scheduling policy}. In particular, as packets arrive at the scheduler they are {\em enqueued}, a process that involves ranking based on the scheduling policy and ordering the packets according to the rank. Then, periodically, packets are {\em dequeued} according to the packet ordering. In general, the dequeuing of a packet might, for some scheduling policies, prompt recalculation of ranks and a reordering of the remaining packets in the queue. A packet scheduler should be {\em efficient} by performing a minimal number of operations on packet enqueue and dequeue thus enabling the handling of packets at high rates. It should also be {\em flexible} by providing the necessary abstractions to implement as many scheduling policies as possible. 

In modern networks, hardware and software both play an important role \cite{211243}. While hardware implementation of network functionality will always be faster than its corresponding software implementation, software schedulers have several advantages. First, the short development cycle and flexibility of software makes it an attractive replacement or precursor for hardware schedulers.
Second, the number of rate limiters and queues deployed in hardware implementations typically lags behind network needs.  For instance, three years ago, network needs were estimated to be in the tens of thousands of rate limiters \cite{SENIC} while hardware network cards offered 10-128 queues \cite{intel_old1}. 
Third, software packet schedulers can be deployed in multiple platforms and locations, including middleboxes as Virtual Network Functions and end hosts (e.g., implementation based on BESS \cite{softnic}, or OpenVSwitch \cite{openvswitch}).
Hence, we assert that software solutions will always be needed to replace or augment hardware schedulers \cite{billaud2013hclock,hubert2002linux,carousel,QFQ, kogan2017programmable}. However, as will be discussed in  Section \ref{sec:background_and_objectives}, current software schedulers do not meet our efficiency and flexibility objectives. 



{\em Our focus in this paper is on the design and implementation of efficient and flexible packet scheduling in software.} The need for programmable schedulers is rising as more sophisticated policies are required of networks \cite{feamster2017and, pifo} with schedulers deployed at multiple points on a packet's path.  It has proven difficult to achieve scheduler efficiency in software schedulers, especially handling packets at high line rates, without limiting the supported scheduling policies \cite{carousel,pifo,hubert2002linux, carousel,billaud2013hclock, QFQ}. 
Furthermore, CPU-efficient implementation of even the simplest scheduling policies is still an open problem for most platforms. For instance, kernel packet pacing can cost CPU utilization of up to 10\% \cite{carousel} and up to 12\% for hierarchical weighted fair queuing scheduling in NetIOC of VMware's hypervisor \cite{vmware_2}. This overhead will only grow as more programmability is added to the scheduler, assuming basic building blocks remain the same (e.g., OpenQueue \cite{kogan2017programmable}). {\color{black} The inefficiency of the discussed systems stems from relying on $O(\log n)$ comparison-based priority queues.

At a fundamental level, a scheduling policy that has $m$ ranking functions associated with a packet (e.g., pacing rate, policy-based rate limit, weight-based share, and deadline-based ordering) typically requires $m$ priority queues in which this packet needs to be enqueued and dequeued \cite{pifo-hotnets}, which translates roughly to $O(m\log n)$ operations per packet for a scheduler with $n$ packets enqueued. We show how to reduce this overhead to $O(m)$ for any scheduling policy (i.e., constant overhead per ranking function).}

Our approach to providing both flexibility and efficiency in software packet schedulers is two fold.  
First, we observe (\S\ref{sec:background_and_objectives}) that packet ranks can be represented as integers that at any point in time fall within a limited window of values.
We exploit this property (\S\ref{sec:ffs}) to employ integer priority queues that have $O(1)$ overhead for packet insertion and extraction. We achieve this by proposing a modification to priority queues based on the Find First Set (FFS) instruction, found in most CPUs, to support a wide range of policies and ranking functions efficiently. We also propose a new approximate priority queue that can outperform FFS-based queues for some scenarios (\S\ref{sec:gq}). Second, we observe (\S\ref{sec:flex}) that packet scheduling programming models (i.e., PIFO \cite{pifo} and OpenQueue \cite{kogan2017programmable}) do not support per-flow packet scheduling nor do  they support reordering of packets on a dequeue operation. 
We augment the PIFO scheduler programming model to capture these two abstractions.

We introduce \algoname{}, an efficient and flexible software scheduler that instantiates our proposed approach. Eiffel is a software packet scheduler that can be deployed on end-hosts and software switches to implement any scheduling algorithm. To demonstrate this we implement Eiffel (\S\ref{sec:impl}) in: 1) the kernel as a Queuing Discipline (qdisc) and compare it to Carousel \cite{carousel} and FQ/Pacing \cite{fq} and 2) the Berkeley Extensible Software Switch (BESS) \cite{bess,softnic} using Eiffel-based implementations of pFabric \cite{pfabric} and hClock \cite{billaud2013hclock}. We evaluate Eiffel in both settings (\S\ref{sec:eval}). Eiffel outperforms Carousel by 3x and FQ/Pacing by 14x in terms of CPU overhead when deployed on Amazon EC2 machines with line rate of 20 Gbps.  We also find that an Eiffel-based implementation of pFabric and hClock outperforms an implementation using comparison-based priority queues by 5x and 40x respectively in terms of maximum number of flows given fixed processing capacity and target rate.


\vspace{-0.1in}

\section{Background and Objectives}
\label{sec:background_and_objectives}
\vspace{-0.1in}

\begin{table*}[t!]
\footnotesize
\centering
\begin{tabular}{|c|l|c|c|c|c|c|c|} 
\hline
\multirow{3}*{System} & \multirow{3}*{Efficiency} & \multirow{3}*{HW/SW}  & \multicolumn{4}{c|}{Flexibility} & \multirow{3}*{Notes} \\
\cline{4-7}
 & & & Unit of &  Work- &   Supports    & \multirow{2}*{Programmable} & \\
&  & & Scheduling & Conserving & Shaping  &     &  \\

 \hline

FQ/Pacing qdisc \cite{fq} & $O(\log n)$ & SW & Flows & No & \textbf{Yes} & No & Only non-work conserving FQ \\




hClock \cite{billaud2013hclock} & $O(\log n)$ & SW & Flows & {\color{black}\textbf{Yes}}  & {\color{black}\textbf{Yes}}  & No &   Only HWPQ Sched. \\

Carousel \cite{carousel} & $O(1)$ & SW & Packets & {\color{black}{No}}  & {\color{black}\textbf{Yes}} & No & Only non-work conserving sched.  \\


OpenQueue \cite{kogan2017programmable} & $O(\log n)$ & SW & Packets \& Flows & {\color{black}\textbf{Yes}} & No & On enq/deq    & Inefficient building blocks \\

PIFO \cite{pifo}   & $O(1) $ & HW & Packets & {\color{black}\textbf{Yes}}  &  {\color{black}\textbf{Yes}} & On enq &  Max. \# flows 2048\\

\textbf{\textit{Eiffel}} & $O(1)$ & SW & Packets \& Flows & {\color{black}\textbf{Yes}} &  {\color{black}\textbf{Yes}} & On enq/deq &  - \\
\hline
\end{tabular}
\vspace{-0.1in}
\caption{Proposed work in the context of the state of the art in scheduling}
\vspace{-0.1in}
\label{table:1}
\end{table*}

In modern networks, packet scheduling can easily become the system bottleneck. This is because schedulers
are burdened with the overhead of maintaining a large number of buffered packets sorted according to scheduling policies. {\color{black}Despite the growing capacity of modern CPUs, packet processing overhead remains a concern. Dedicating CPU power to networking takes from CPU capacity that can be dedicated to VM customers especially in cloud settings \cite{211249}.}
One approach to address this overhead is to optimize the scheduler for a specific scheduling policy \cite{fq,linuxhtb,billaud2013hclock, carousel, QFQ}. However, with specialization two problems linger. First, in most cases inefficiencies remain because of the typical reliance on generic default priority queues in modern libraries (e.g., RB-trees in kernel and Binary Heaps in C++). Second, even if efficiency is achieved, through the use of highly efficient specialized data structures (e.g., Carousel \cite{carousel} and QFQ \cite{QFQ}) or hybrid hardware/software systems (e.g. SENIC \cite{SENIC}), this efficiency is achieved at the expense of programmability. The Eiffel system we develop in this paper is designed to be both efficient and programmable.  In this section we examine these two objectives, show how existing solutions fall short of achieving them and highlight our approach to successfully combine efficiency with flexibility.

\textbf{Efficient Priority Queuing:} Priority queuing is fundamental to computer science with a long history of theoretical results. Packet priority queues are typically developed as comparison-based priority queues \cite{fq,billaud2013hclock}. A well known result for such queues is that they require $O(\log n)$ steps for either insertion or extraction for a priority queue holding $n$ elements \cite{Thorup2006Equivalence}. This applies to data structures that are widely used in software packet schedulers such as RB-trees, used in kernel Queuing Disciplines, and Binary Heaps, the standard priority queue implementation in C++.  

Packet queues, however, have the following characteristics that can be exploited  to significantly lower the overhead of packet insertion and extraction:
\vspace{-0.1in}
\begin{itemize}[leftmargin=*]
\item \textit{Integer packet ranks:} Whether it is deadlines, transmission time, slack time, or priority, the calculated rank of a packet can always be represented as an integer. 
\vspace{-0.1in}
\item \textit{Packet ranks have specific ranges:}  At any point in time, the ranks of packets in a queue will typically fall within a limited range of values {\color{black}(i.e., with well known maximum and minimum values)}. This range is policy and load dependent and can be determined in advance by operators (e.g., transmission time where packets can be scheduled a maximum of a few seconds ahead, flow size, or known ranges of strict priority values). {\color{black}Ranges of priority values are diverse, ranging from just eight levels \cite{1637340}, to 50k for a queue implementing per flow weighted fairness which requires a number of priorities corresponding to the number of flows (i.e., 50k flows on a video server \cite{carousel}), and up to 1 million priorities for a time indexed priority queue \cite{carousel}. }
\vspace{-0.1in}
\item \textit{Large numbers of packets share the same rank:} Modern line rates are in the ranges of 10s to 100s of Gbps. Hence, multiple packets are bound to be transmitted with nanosecond time gaps. This means that packets with small differences in their ranks can be grouped and said to have the same rank with minimal or no effect on the accurate implementation of the scheduling policy. {\color{black}For instance, consider a busy-polling-based packet pacer that can dequeue packet at fixed intervals (e.g., order of 10s of nanoseconds). In that scenario, packets with gaps smaller than 10 nanoseconds can be considered to have the same rank.}

\end{itemize}
\vspace{-0.1in}

These characteristics make the design of a packet priority queue effectively the design of bucketed integer priority queues over a finite range of rank values $[0, C]$ with number of buckets $N$, each covering $C/N$ interval of the range. 
The number of buckets, and consequently the range covered by each bucket, depend on the required ranking granularity which is a characteristic of the scheduling policy. The number of buckets is typically in the range of a few thousands to hundreds of thousands. Elements falling within a range of a bucket are ordered in FIFO fashion. 
Theoretical complexity results for such bucketed integer priority queues are reported in \cite{van1975preserving,fredman1990blasting,Thorup2006Equivalence}. 

Integer priority queues do not come for free. Efficient implementation of integer priority queues requires pre-allocation of buckets and meta data to access those buckets. 
In a packet scheduling setting the number of buckets is fixed, making the overhead per packet a constant whose value is logarithmic in the number of buckets{\color{black}, because searching is performed on the bucket list not the list of elements}. 
Hence, bucketed integer priority queues achieve CPU efficiency at the expense of maintaining elements unsorted within a single bucket and pre-allocation of memory for all buckets. Note that maintaining elements unsorted within a bucket is inconsequential because packets within a single bucket effectively have equivalent rank.
Moreover, the memory required for buckets, in most cases, is minimal (e.g., tens to hundreds of kilobytes), which is consistent with earlier work on bucketed queues \cite{carousel}.
Another advantage of bucketed integer priority queues is that elements can be (re)moved with $O(1)$ overhead. This operation is used heavily in several scheduling algorithms (e.g., hClock \cite{billaud2013hclock} and pFabric \cite{pfabric}). 

Recently, there has been some attempts to employ data structures specifically developed or re-purposed for efficiently implementing specific packet scheduling algorithms. For instance, Carousel \cite{carousel}, a system developed for rate limiting at scale, relies on Timing Wheel \cite{tw}, a data structure that can support time-based operations in $O(1)$ and requires comparable memory to our proposed approach. However, Timing Wheel supports only non-work conserving time-based schedules in $O(1)$. {\color{black}Timing Wheel is efficient as buckets are indexed based on time and elements are accessed when their deadline arrives. However, Timing Wheel does not support operations needed for non-work conserving schedules (i.e., \verb|ExtractMin| or \verb|ExtractMax|).}
Another example is efficient approximation of popular scheduling policies (e.g., Start-Time Fair Queueing \cite{Goyal:1996:SFQ:248156.248171} 
as an approximation of Weighted Fair Queuing \cite{Demers:1989:ASF:75246.75248}, or the more recent Quick Fair Queuing (QFQ) \cite{QFQ}). This approach of developing a new system or a new data structure per scheduling policy does not provide a 
path to the efficient implementation of more complex policies. Furthermore, it does not allow for a truly programmable network. These limitations lead us to our first objective for Eiffel:


\noindent\textit{Objective 1: Develop data structures that can be employed for any scheduling algorithm providing $O(1)$ processing overhead per packet leveraging integer priority queues (\S\ref{sec:pq}).}

\textbf{Flexibility of Programmable Packet Schedulers:} There has been recent interest in developing flexible, programmable, packet schedulers \cite{pifo, kogan2017programmable}. This line of work is motivated by the support for programmability in all aspects of modern networks. 
{\color{black} Specifically, several packet scheduling algorithms have been proposed each aiming at achieving a specific and diverse set of objectives. Some of such policies include per flow rate limiting \cite{carousel}, hierarchical rate limiting with strict or relative prioritization \cite{billaud2013hclock,linuxhtb}, optimization of flow completion time \cite{pfabric}, and joint network and CPU scheduling \cite{DRFQ}, several of which are used in practice.}

{\color{black}Work on programmable schedulers focuses on providing the infrastructure for network operators to define their own scheduling policies. This approach improves on the current standard approach of providing a small fixed set of scheduling policies as currently provided in modern switches. A programmable scheduler provides building blocks for customizing packet ranking and transmission timing. Proposed programmable schedulers differ based on the flexibility of their building blocks. A flexible scheduler allows a network operator to specify policies according to the following specifications:}
\vspace{-0.1in}
\begin{itemize}[leftmargin=*]
\item \textit{Unit of Scheduling:} Scheduling policies operate either on per packet basis (e.g., pacing) or on per flow basis (e.g., fair queuing). This requires a model that provides abstractions for both.
\vspace{-0.1in}
\item \textit{Work Conservation:} Scheduling policies can also be work-conserving or non-work-conserving. 
\vspace{-0.1in}
\item \textit{Ranking Trigger:} Efficient implementation of policies can require  ranking packets on their enqueue, dequeue, or both. 
\end{itemize}
\vspace{-0.1in}


Recent programmable packet schedulers export primitives that enable the specification of a scheduling policy and its parameters, often within limits.
The PIFO scheduler programming model is the most prominent example \cite{pifo}. It is implemented in hardware relying on Push-In-First-Out (PIFO) building blocks where packets are ranked only on enqueue. The scheduler is programmed by arranging the blocks to implement different scheduling policies. Due to its hardware implementation, the PIFO model employs compact constructs with considerable flexibility. However, PIFO remains very limited in its capacity (i.e., PIFO can handle  a a maximum of 2048 flows at line rate), and expressiveness (i.e., PIFO can't express per flow scheduling). OpenQueue is an example of a flexible programmable packet scheduler in software \cite{kogan2017programmable}. However, the flexibility of OpenQueue comes at the expense of having three of its building blocks as priority queues, namely queues, buffers, and ports.  This overhead, even in the presence of efficient priority queues, will form a memory and processing overhead. Furthermore, OpenQueue does not support non-work-conserving schedules.

The design of a flexible and efficient packet scheduler remains an open research challenge. {\color{black} It is important to note here that the efficiency of programmable schedulers is different from the efficiency of policies that they implement. An efficient programmable platform aims to reduce the overhead of its building blocks (i.e., Objective 1) which makes the overhead primarily a function of the complexity of the policy itself. Thus, the efficiency of a scheduling policy becomes a function of only the number of building blocks required to implement it. Furthermore, an efficient programmable platform should allow the operator to choose policies based on their requirements and available resources by allowing the platform to capture a wide variety of policies.} To address this challenge, we choose to extend the PIFO model due to its existing efficient building blocks. In particular, we introduce flows as a unit of scheduling in the PIFO model. We also allow modifications to packet ranking and relative ordering both on enqueue and dequeue. 


\noindent\textit{Objective 2: Provide a fully expressive scheduler programming abstraction by extending the PIFO model (\S\ref{sec:flex}).}

\textbf{Eiffel's place in Scheduling Research Landscape:}
This section reviewed scheduling support in software
\footnote{Scheduling is widely supported in hardware switches using a short list of scheduling policies, including shaping, strict priority, and Weighted Round Robin \cite{arista_1,arista_2,cisco_1,pifo}. An approach to efficient hardware packet scheduling relies on pipelined-heaps  \cite{bhagwan2000fast,4154755,wang2013per} to help position Eiffel. Pipelined-heaps are composed of piplined-stages for enqueuing and dequeuing elements in a priority queue. However, such approaches are not immediately applicable to software.}.
Table~\ref{table:1} summarizes the discussed related work. Eiffel fills the gap in earlier work by being the first efficient $O(1)$ and programmable software scheduler. It can support both per flow policies (e.g., hClock and pFabric) and per packet scheduling policies (e.g., Carousel). It can also support both work-conserving and non-work-conserving schedules. 



\section{Eiffel Design}
\label{sec:eiffel}
\vspace{-0.1in}

Figure~\ref{fig:arch} shows the architecture of \algoname{} with four main components: 1) a packet annotator to set the input to the enqueue component (e.g., packet priority), 2) an enqueue component that calculates a rank for incoming packets, 3) a queue that holds packets sorted based on their rank, and 4) a dequeue component which is triggered to re-rank elements in the queue, for some scheduling algorithms. Eiffel leverages and extends the PIFO scheduler programming model to describe scheduling policies \cite{pifo,pifo_github}. 
The functions of the packet annotator, the enqueue module, and the dequeue module are derived in a straightforward manner from the scheduling policy. 
The only complexity in the Scheduler Controller, namely converting scheduling policy description to code, has been addressed in earlier work on the PIFO model \cite{pifo_github}.  The two complicated components in this architecture, therefore, correspond with the two objectives discussed in the previous section: the Queue \textit{(Objective 1)} and The Scheduling policy Description \textit{(Objective 2)}. For the rest of this section, we explain our efficient queue data structures along with our extensions to the programming model used to configure the scheduler.

\vspace{-0.1in}
\subsection{Priority Queueing in Eiffel}
\label{sec:pq}
\vspace{-0.08in}
\begin{figure}[t]
\centering
\includegraphics[width=0.5\textwidth]{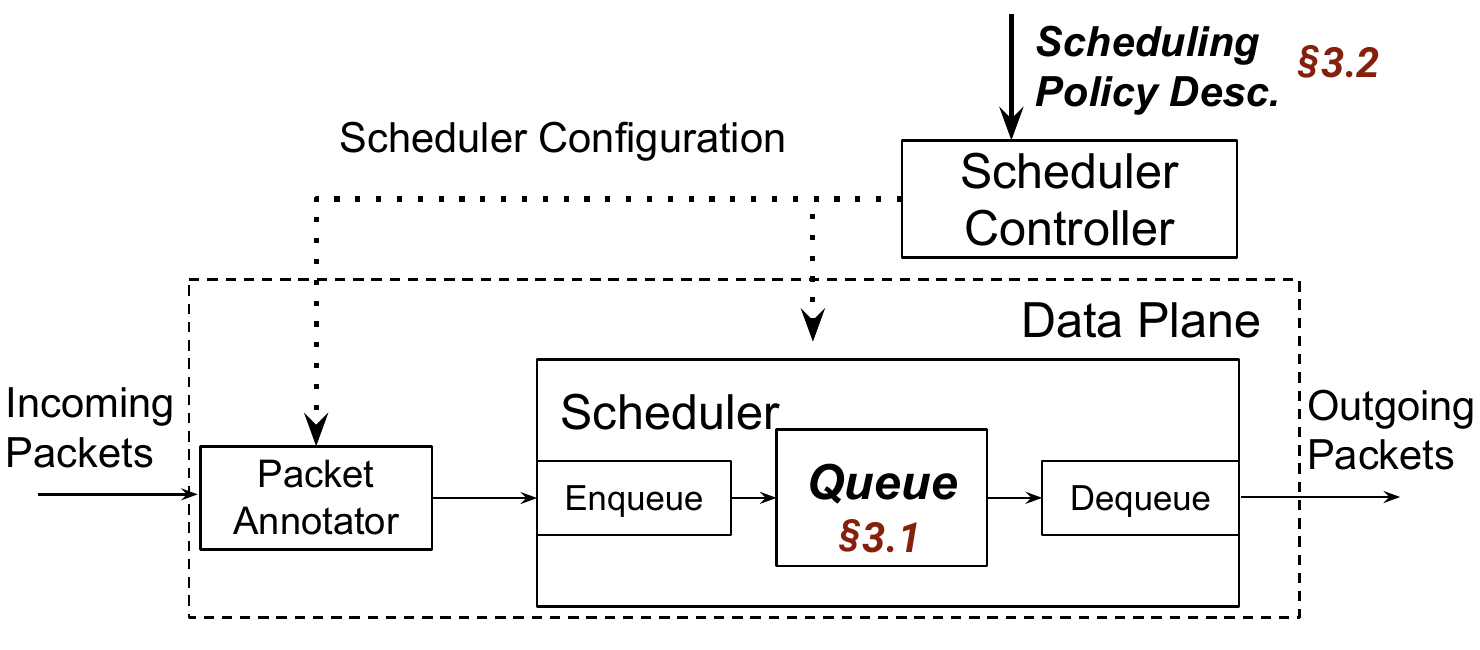}
\vspace{-0.1in}
\caption{Eiffel programmable scheduler architecture highlighting Eiffel's extensions.}
\label{fig:arch}
\vspace{-0.1in}
\end{figure}

A priority queue maintains a list of elements, each tagged with a priority value. A priority queue supports one of two operations efficiently: \verb|ExtractMin| or \verb|ExtractMax| to get the element with minimum or maximum priority respectively.
Our goal, as stated in Objective 1 in the previous section, is to enable these operations with $O(1)$ overhead. 
To this end we first develop a circular extension of efficient priority queues that rely on the FindFirstSet (FFS) operation, found in all modern CPUs \cite{intel_ia, amd_ia}. Our extensions allow FFS-based queues to operate over large moving ranges while maintaining CPU efficiency. We then improve on the FFS-based priority queue by introducing the approximate gradient queue, which can perform priority queuing in $O(1)$ under some conditions. The approximate priority queue can outperform the FFS-based queue by up to 9\% for scenarios of a highly occupied bucketed priority queue (\S\ref{sec:eval_2}). {\color{black}Note that for all Integer Priority Queues discussed in this section, enqueue operation is trivial as buckets are identified by the priority value of their elements. This makes the enqueue operation a simple bucket lookup based on the priority value of the enqueued element.}



\vspace{-0.1in}

\subsubsection{Circular FFS-based Queue (cFFS)}
\label{sec:ffs}

\begin{figure}
    \centering
    \begin{minipage}{0.22\textwidth}
        \centering
        \includegraphics[width=0.9\textwidth]{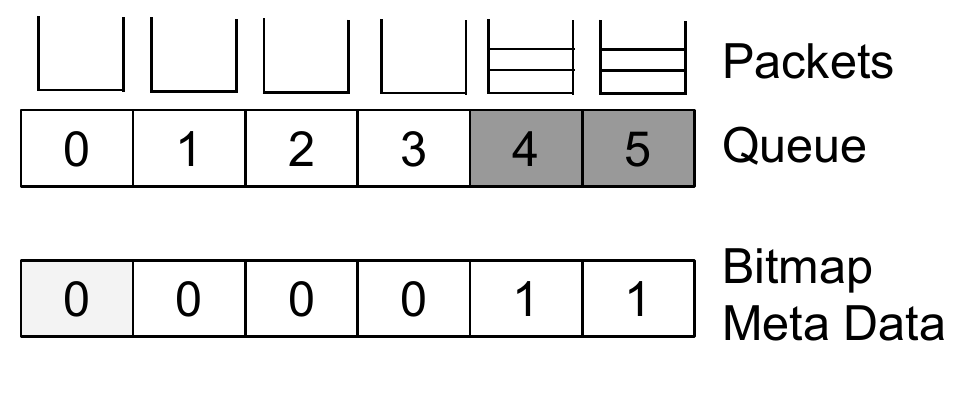} 
        \caption{FFS-based queue where FFS of a bit-map of six bits can be processed in O(1).}
        \label{fig:basic_ffs}
    \end{minipage}\hfill
    \begin{minipage}{0.22\textwidth}
        \centering
        \includegraphics[width=0.9\textwidth]{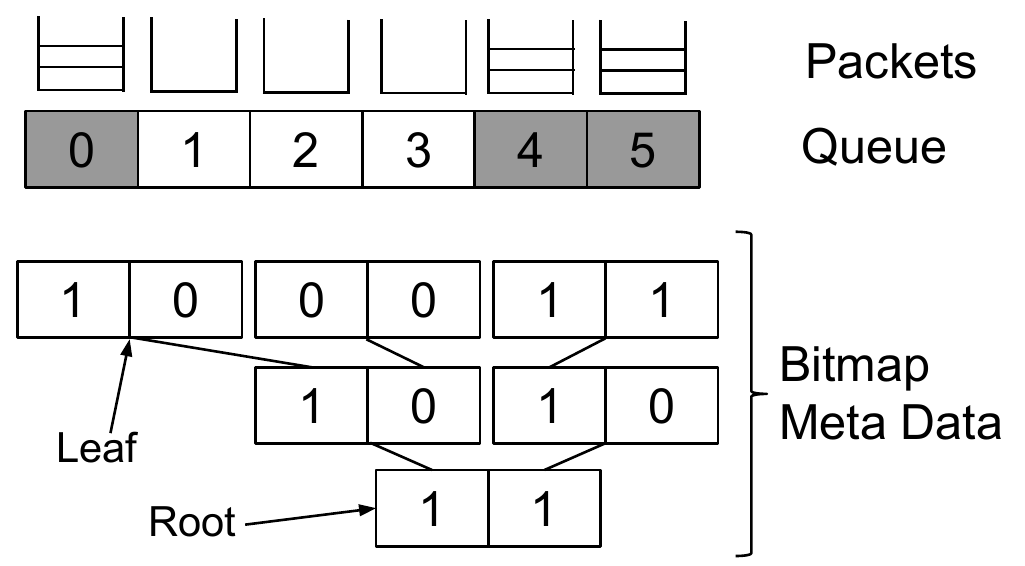} 
        \caption{Hierarchical FFS-based queue where FFS of bit-map of two bits can be processed in O(1) using a 3-level hierarchy}
        \label{fig:hie_ffs}
    \end{minipage} 
    \vspace{-0.1in}
\end{figure}


FFS-based queues are bucketed priority queues with a bitmap representation of queue occupancy. Zero represents an empty bucket, and one represents a non-empty bucket. FFS produces the index of the leftmost set bit in a machine word in constant time. {\color{black}All modern CPUs support a version of Find First Set at a very low overhead (e.g.,  Bit-Scan-Forward (\verb|BSR|) takes three cycles to complete \cite{intel_ia}).} Hence, a priority queue, with a number of buckets equal to or smaller than the width of the word supported by the FFS operation can obtain the smallest set bit, and hence the element with the smallest priority, in $O(1)$ (e.g., Figure~\ref{fig:basic_ffs}). In the case that a queue has more buckets than the width of the word supported by a single FFS operation, a set of words can be processed sequentially to represent the queue with every bit representing a bucket. This results in an $O(M)$ algorithm that is very efficient for very small $M$, where $M$ is the number of words. For instance, realtime process scheduling in the linux kernel has a hundred priority levels. An FFS-based priority queue is used where FFS is applied sequentially on two words, in case of 64-bit words, or four words in case of 32-bit words \cite{linux_rt}. {\color{black} This algorithm is not efficient for large values of $M$ as it requires scanning all words, in the worst case, to find the index of the highest priority element.} FFS instruction is also used in QFQ to sort groups of flows based on the eligibility for transmission where the number of groups is limited to a number smaller than 64 \cite{QFQ}. {\color{black}QFQ is an efficient implementation of fair queuing which uses FFS efficiently over a small number of elements. However, QFQ does not provide any clear direction towards implementing other policies efficiently.}


To handle an even larger numbers of priority levels, hierarchical bitmaps may be used. One example is Priority Index Queue (PIQ) \cite{wang2013per}, a hardware implementation of FFS-based queues, which introduces a hierarchical structure where each node represents the occupancy of its children, and the children of leaf nodes are buckets. The minimum element can be found by recursively navigating the tree using FFS operation (e.g., Figure~\ref{fig:hie_ffs} for a word width of two). Hierarchical FFS-based queues have an overhead of $O(\log_{w}N)$ where $w$ is the width of the word that FFS can process in $O(1)$ and $N$ is the number of buckets. It is important to realize that, for a given scheduling policy, the value of $N$ is a given fixed value that doesn't change once the scheduling policy is configured. Hence, a specific instance of a Hierarchical FFS-based queue has a constant overhead independent of the number of enqueued elements. {\color{black}In other words, once an implementation is created $N$ does not change.}

Hierarchical FFS-based queues only work for a fixed range of priority values. However, as discussed earlier, typical priority values for packets span a moving range. PIQ avoids this problem by assuming support for the universe of possible values of priorities. This is an inefficient approach because it requires generating and maintaining a large number buckets, with relatively few of them in use at any given time. 

Typical approaches to operating over a large moving range while maintaining a small memory footprint rely on {\em circular queues}. Such queues rely on the $mod$ operation to map the moving range to a smaller range. However, the typical approach to circular queuing does not work in this case as it results in an incorrect bitmap. For example, if we add a packet with priority value six to the queue in Figure~\ref{fig:basic_ffs} selecting the bucket with a $mod$ operation, the packet will be added in slot zero and consequently mark the bit map at slot zero.
{Hence, once the range of an FFS-based queue is set, all elements enqueued in that range have to be dequeued before the queue can be assigned a new range {\color{black}so as to avoid unnecessary resetting of elements. 
In that scenario, enqueued elements that are out of range are enqueued at the last bucket, and thus losing their proper ordering. Otherwise, the bitmap meta data will have to be reset in case any changes are made to the range of the queue.}}

\begin{figure}
    \centering
    \includegraphics[width=0.45\textwidth]{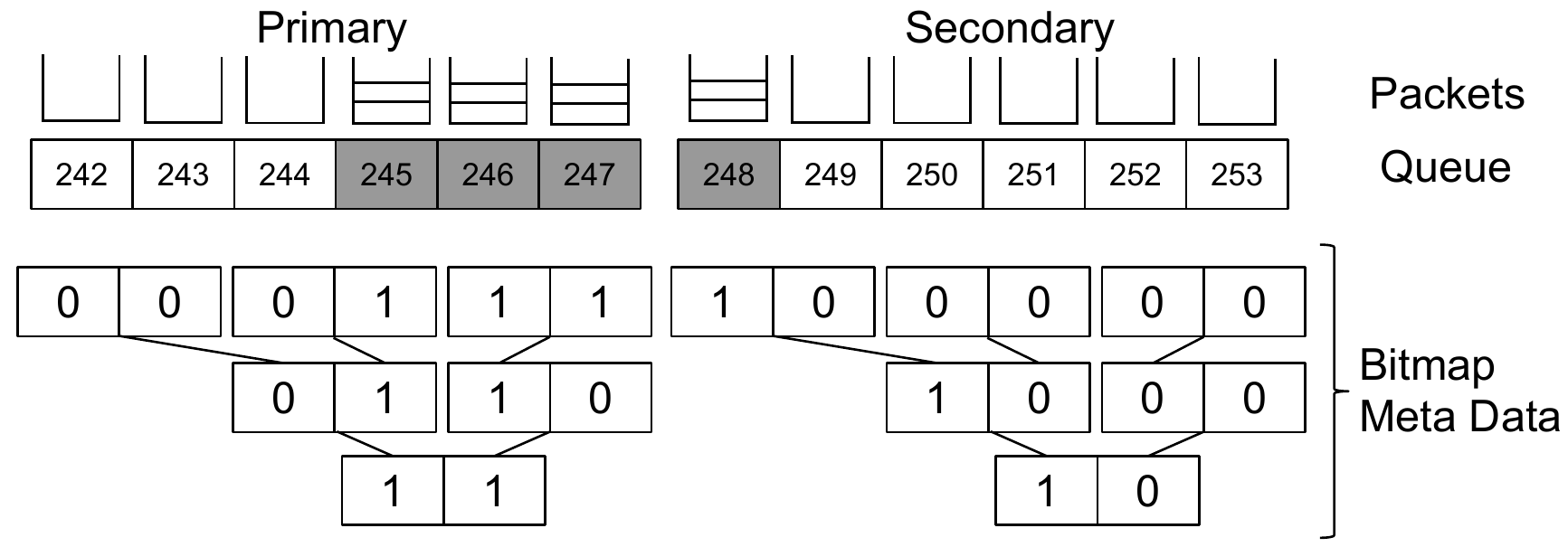} 
    \vspace{-0.1in}
    \caption{Circular Hierarchical FFS-based queue is composed of two Hierarchical FFS-based queues, one acting as the main queue and the other as a buffer. }
      \vspace{-0.1in}
    \label{fig:circ_ffs}
\end{figure}

A natural solution to this problem is to introduce an overflow queue where packets with priority values outside the current range are stored. Once all packets in the current range are dequeued, packets from that ``secondary'' queue are inserted using the new range. However, this introduces a significant overhead as we have to go through all packets in the buffer every time the range advances. We solve this problem by making the secondary queue an FFS-based queue, covering the range that is immediately after the range of the queue (Figure~\ref{fig:circ_ffs}). Elements outside the  range of the secondary queue are enqueued at the last bucket in the secondary queue and their values are not sorted properly. However, we find that to not be a problem as ranges for the queues are typically easy to figure out given a specific scheduling policy.

A {\em Circular Hierarchical FFS-based queue}, referred to hereafter simply as a cFFS, maintains the minimum priority value supported by the primary queue (\verb|h_index|), the number of buckets (\verb|q_size|) per queue, two pointers to the two sets of buckets, and two pointers to the two sets of bitmaps. Hence, the queue ``circulates'' by switching the pointers of the two queues from the buffer range to the primary range and back based on the location of the minimum element along with their corresponding bitmaps. 



Note that work on efficient priority queues has a very long history in computer science with examples including van Emde Boas tree \cite{van1975preserving} and Fusion trees \cite{fredman1990blasting}. However, such theoretical data structures are complicated to implement and require complex operations. cFFS is highly efficient both in terms of complexity and the required bit operations. Moreover, it is relatively easy to implement.

\vspace{-0.18in}
\subsubsection{Approximate Priority Queuing }
\label{sec:gq}
 \vspace{-0.1in}

cFFS queues still require more than one step to find the minimum element. We explore a tradeoff between accuracy and efficiency by developing a gradient queue, a data structure that can find a \textit{near} minimum element in one step.

\textbf{Basic Idea}
\label{sec:gq_basic_idea}
\begin{figure}[!t]
\centering
\includegraphics[width=1\linewidth]{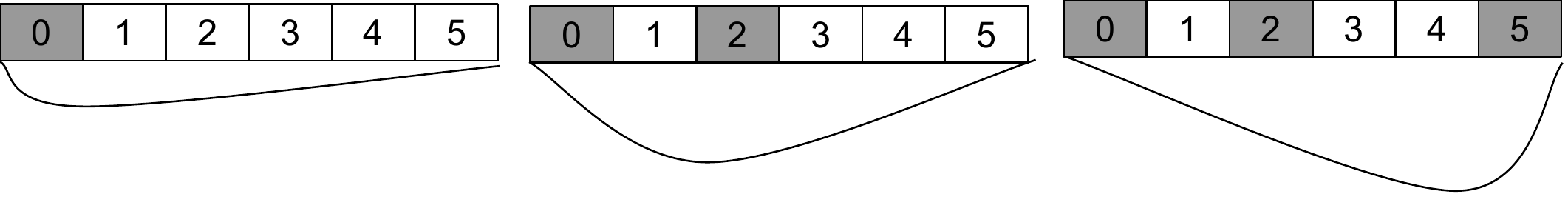}
\vspace{-0.4in}
\caption{A sketch of a curvature function for three states of a \textbf{maximum} priority queue. As the maximum index of nonempty buckets increases, the critical point shifts closer to that index.}
\vspace{-0.2in}
\label{fig:curv_fun}
\end{figure}
The Gradient Queue (GQ) relies on an algebraic approach to calculating FFS. In other words, it attempts to find the index of the most significant bit using algebraic calculations. {\em This makes it amenable to approximation}.
The intuition behind GQ is that the contribution of the most significant set bit to the value of a word is larger than the sum of the contributions of the rest of the set bits.
We consider the weight of a non-empty bucket to be proportional to its index.
Hence, Gradient Queue occupancy is represented by its curvature function. The curvature function of the queue is the sum of the weight functions of all nonempty buckets in the queue. {\color{black}More specifically, a specific curvature shape corresponds to a specific occupancy pattern. \textit{A proper weight function} ensures the uniqueness of the curvature function per occupancy pattern. It also makes finding the non-empty bucket with the maximum index equivalent to finding the critical point of the queue's curvature (i.e., the point where the derivative of the curvature function of the queue is zero).} A sample sketch of a curvature function is illustrated in Figure~\ref{fig:curv_fun}. 



\textbf{Exact Gradient Queue}
On a bucket becoming nonempty, we add its weight function to the queue's curvature function, and we subtract its function when it becomes empty. We define a desirable weight function as one that is: 1) easy to differentiate to find the critical point, and 2) easy to maintain when bucket state changes between empty and non-empty. We use weight function, $2^{i} (x-i)^2$ where $i$ is the index of the bucket and $x$ is the variable in the space of the curvature function. 

This weight function results in queue curvature of the form of $ax^2-bx+c$, where the critical point is located at $x=b/2a$. Hence, we only care about $a$ and $b$ where $a=\sum_i 2^i$ and $b=\sum_i i2^i$ for all non-empty buckets $i$. The maintenance of the curvature function of the queue becomes as simple as incrementing  and decrementing $a$ and $b$ when a bucket becomes non-empty or empty respectively. Theorem~\ref{thm1}, in Appendix~\ref{app:thm}, shows that determining the highest priority non-empty queue can be calculated using $ceil (b/a)$.

A Gradient Queue with a single curvature function is limited by the the range of values $a$ and $b$ can take, which is analogous to the limitation of FFS-based queues by the size of words for which FFS can be calculated in $O(1)$. A natural solution is to develop a hierarchical Gradient Queue. This makes Gradient Queue an equivalent of FFS-based queue with more expensive operations (i.e., division is more expensive than bit operations). However, due to its algebraic nature, Gradient Queue allows for approximation that is not feasible using bit operations.

\textbf{Approximate Gradient Queue}
\label{sec:approx}
{\color{black}Like FFS-based queues, gradient queue has a complexity of $O(\log_{w} N)$ where $w$ is the width of the representation of $a$ and $b$ and $N$ is the number of buckets. Our goal is reduce the number of steps even further for each lookup. We are particularly interested in having lookups that can be made in one operation, which can be achieved through approximation. 
The advantage of the curvature representation of the Gradient Queue compared to FSS-based approaches is that it lends itself naturally to approximation.}

A simple approximation is to make the value of $a$ and $b$ corresponding to a certain queue curvature smaller which will allow them to represent a larger number of priority values. In particular, we change the weight function to $2^{f(i)}(x-i)^{2}$ which results in
$a=\sum_i 2^{f(i)}$ and $b=\sum_i i2^{f(i)}$ where ${f(i)} = i/\alpha$ and $\alpha$ is a positive integer. This approach leads to two natural results: 1) {\color{black} the biggest gain of the approximation is that} $a$ and $b$ can now represent a much larger range of values for $i$ which eliminates the need for hierarchical Gradient Queue and allows for finding the minimum element with one step, and 2) {\color{black}the employed weight function is no longer proper. While \verb|BSR| instruction is 8-32x faster than \verb|DIV| \cite{intel_ia}, the performance gained from the reduced memory lookups required per \verb|BSR| operation.}

{\color{black}This approximation stems from using an ``improper'' weight function. This leads to breaking the two guarantees of a proper weight function, namely: 1) the curvature shape is no longer unique per queue occupancy pattern, and 2) the index of the maximum non-empty bucket no longer corresponds to the critical point of the curvature \textit{in all cases}. In other words,} the index of the maximum non-empty bucket, $M$, is no longer $ceil (b/a) $ due the fact that the weight of the maximum element no longer dominates the curvature function as the growth is sub-exponential. 
{\color{black} However, this ambiguity does not exist for all curvatures (i.e., queue occupancy patterns).}

{\color{black}We characterize the conditions under which ambiguity occurs causing error in identifying the highest priority non-empty bucket. Hence, we} identify scenarios where using the approximate queue is acceptable. The effect of $f(i)= i/\alpha$ can be described as introducing ambiguity to the value of $ceil (b/a)$. This is because exponential growth in $a$ and $b$ occurs not between consecutive indices but every $\alpha$ indices. In particular, we find solving the geometric and arithmetic-geometric sums of $a$ and $b$ that $\frac{b}{a} = \frac{M}{1-g(\alpha,M)} + u(\alpha)$
where $g(\alpha,M) = (2^{1/\alpha})^{-M-1}$ is a logarithmically decaying function of $M$ and $\alpha$. $u(\alpha) = 1/(1-2^{1/\alpha})$ is non-linear but slowly growing function of $\alpha$. Hence, an approximate GQ can operate as a bucketed-queue where indices start from  $I_0$ where $g(\alpha,M_0)\approx0$ and end at $I_{max}$ where $2^{f(I_{max})}$ can be precisely represented in the CPU word used to represent $a$ and $b$. In this case, there is a constant shift in the value  $ceil(b/a)$ that is calculated by $u(\alpha)$. 
For instance, consider an approximate queue with an $\alpha$ of 16. The function 
$g(\alpha,M)$ decays to near zero at $M=124$ making the shift $u(\alpha) = 22$. Hence, $I_0 = 124$ and $I_{max} =647$ which allows for the creation of an approximate queue that can handle 523 buckets.
Note that this configuration results in an exact queue only when all buckets between $I_0$ and $I_{max}$ are nonempty. However, error is introduced when some elements are missing. In Section \ref{sec:eval_2}, we show the effect of this error through extensive experiments; more examples are shown in Appendix~\ref{app:example}.


Typical scheduling policies (e.g., timestamp-based shaping, Least Slack Time First, and Earliest Deadline First) will  generate priority values for packets that are uniformly distributed over priority levels. For such scenarios, the approximate gradient queue will have zero error and extract the minimum element in one step. This is clearly not true for \textit{all} scheduling policies (e.g., strict priority will probably have more traffic for medium and low level priorities compared to high priority). For cases where the index suggested by the function is of an empty bucket, we perform linear search until we find a nonempty bucket. {\color{black} Moreover, for a cases of a moving range, a circular approximate queue can be implemented as with cFFS.}

Approximate queues have been used before for different use cases. For instance, Soft-heap \cite{chazelle2000soft} is an approximate priority queue with a bounded error that is inversely proportional to the overhead of insertion. In particular, after $n$ insertions in a soft-heap with an error bound $0~<~\epsilon~\leq~1/2$, the overhead of insertion is $O(\log (1/\epsilon))$. 
Hence, \verb|ExtractMin| operation which can have a large error under Soft-heap. Another example is the RIPQ which is  was developed for caching \cite{tang2015ripq}. RIPQ relies on a bucket-sort-like approach. However, the RIPQ implementation is suited for static caching, where elements are not moved once inserted, which makes it not very suitable for the dynamic nature of packet scheduling.

\vspace{-0.1in}

\subsection{Flexibility in Eiffel}
\label{sec:flex}
Our second objective is to deploy flexible schedulers that have full expressive power to implement a wide range of scheduling policies. {\color{black}Our goal is to provide the network operator with a compiler that takes as input policy description and produces an initial implementation of the scheduler using the building blocks provided in the previous section.}
Our starting point is the work in PIFO which develops a model for programmable packet scheduling \cite{pifo}.  PIFO, however, suffers from several drawbacks, namely: 1) it doesn't support reordering packets already enqueued based on changes in their flow ranking, 2) it does not support ranking of elements on packet dequeue, and 3) it does not support shaping the output of the scheduling policy. In this section, we show our augmentation of the PIFO model to enable a completely flexible programming model in Eiffel. We address the first two issues by adding programming abstractions to the PIFO model and we address the third problem by enabling arbitrary shaping with Eiffel by changing how shaping is handled within the PIFO model. {\color{black}We discuss the implementation of an initial version of the compiler in Section~4.}

\vspace{-0.1in}

\subsubsection{PIFO Model Extensions}

Before we present our new abstractions, we review briefly the PIFO programming model \cite{pifo}. The model relies on the Push-In-First-Out (PIFO) conceptual queue as its main building block. In programming the scheduler, the PIFO blocks are arranged to implement different scheduling algorithms.

The PIFO programming model has three abstractions: 1) scheduling transactions, 2) scheduling trees, and 3) shaping transactions. A scheduling transaction represents a single ranking function with a single priority queue. Scheduling trees are formed by connecting scheduling transactions, where each node's priority queue contains an ordering of its children. The tree structure allows incoming packets to change the relative ordering of packets belonging to different policies. Finally, a shaping transaction can be attached to any non-root node in the tree to enforce a rate limit on it. There are several examples of the PIFO programming model in action presented in the original paper \cite{pifo}. The primitives presented in the original PIFO model capture scheduling policies that have one of the following features: 1) distinct packet rank enumerations, over a small range of values (e.g., strict priority), 2) per-packet ranking over a large range of priority values (e.g., Earliest Deadline First \cite{liu1973scheduling}), and 3) hierarchical policy-based scheduling (e.g., Hierarchical Packet Fair Queuing \cite{hierarchicalFQ}).

Eiffel augments the PIFO model by adding two additional scheduler primitives. The first primitive is {\em per-flow ranking and scheduling} where the rank of all packets of a flow depend on a ranking that is a function of  the ranks of all packets enqueued for that specific flow. We assume that a sequence of packets that belong to a single flow should not be reordered by the scheduler. Existing PIFO primitives keep per-flow state but use them to rank each packet individually where an incoming packet for a certain flow does not change the ranking of packets already enqueued that belong to the same flow. The per-flow ranking extension keeps track of that information along with a queue per flow for all packets belonging to that flow. A single PIFO block orders flows, rather than packets, based on their rank. The second primitive is {\em on-dequeue scheduling} where incoming and outgoing packets belonging to a certain flow can change the rank of all packets belonging to that flow on enqueue and dequeue. 





\begin{figure}[!t]
{\small
\begin{python}
#On enqueue of packet p of flow f:
f.rank = f.len
#On dequeue of packet p of flow f:
f.rank = f.len
\end{python}
}
\vspace{-0.1in}
\caption{Example of implementation of Longest Queue First (LQF)}
\label{fig:listing_lqf}
\vspace{-0.1in}
\end{figure}
The two primitives can be integrated in the PIFO model. All flows belonging to a \textit{per-flow} transaction are treated as a single flow by scheduling transactions higher in the hierarchical policy. Also note that every individual flow in the flow-rank policy can be composed of multiple flows that are scheduled according to per packet scheduling transactions. Figure~\ref{fig:listing_lqf} is an example of how both primitives are expressed in a PIFO programming model.  The example implements Longest Queue First (LQF) which requires flows to be ranked based on their length which changes on enqueue and dequeue.  The example allows for setting a rank for flows, not just packets (i.e., \verb|f.rank|). It also allows for programming actions on both enqueue and dequeue. {\color{black}We realize that this specification requires tedious work to describe a complex policy that handles thousands of different flows or priorities. However, this specification provides a direct mapping to the underlying priority queues. We believe that defining higher level programming languages describing packet schedulers as well as formal description of the expressiveness of the language to be topics for future research.}

\vspace{-0.1in}
\subsubsection{Arbitrary Shaping}
\label{sec:shaping}
\vspace{-0.1in}

A flexible packet scheduler should support any scheme of bandwidth division between incoming flows. Earlier work on flexible schedulers either didn't support shaping at all (e.g., OpenQueue) or supported it with severe limitations (e.g., PIFO). We allow for arbitrary shaping by decoupling work conserving scheduling from shaping. A natural approach to this decoupling is to allow any flow or group of flows to have a shaper associated with them. This can be achieved by assigning a separate queue to the shaped aggregate whose output is then enqueued into its proper location in the scheduling hierarchy. However, this approach is extremely inefficient as it requires a queue per rate limit, which can lead to increased CPU and memory overhead. We improve the efficiency of this approach by leveraging recent results that show that any rate limit can be translated to a timestamp per packet, which yields even better adherence to the set rate than token buckets \cite{carousel}. Hence, we use only one shaper for the whole hierarchy which is implemented using a single priority queue.

\begin{figure}[!t]
    \centering
\includegraphics[width=0.9\linewidth]{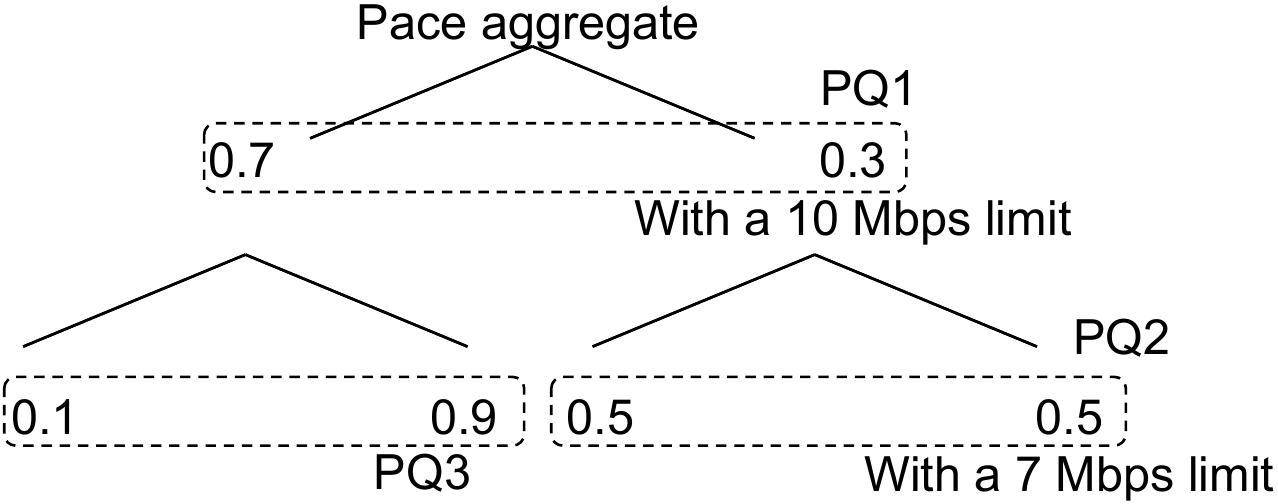}
\vspace{-0.1in}
\caption{Example of a policy that imposes two limits on packets the belong to the rightmost leaf.}
\label{fig:shaping_policy}
\vspace{-0.1in}
\end{figure}

\begin{figure}[!t]
\centering
\includegraphics[width=0.9\linewidth]{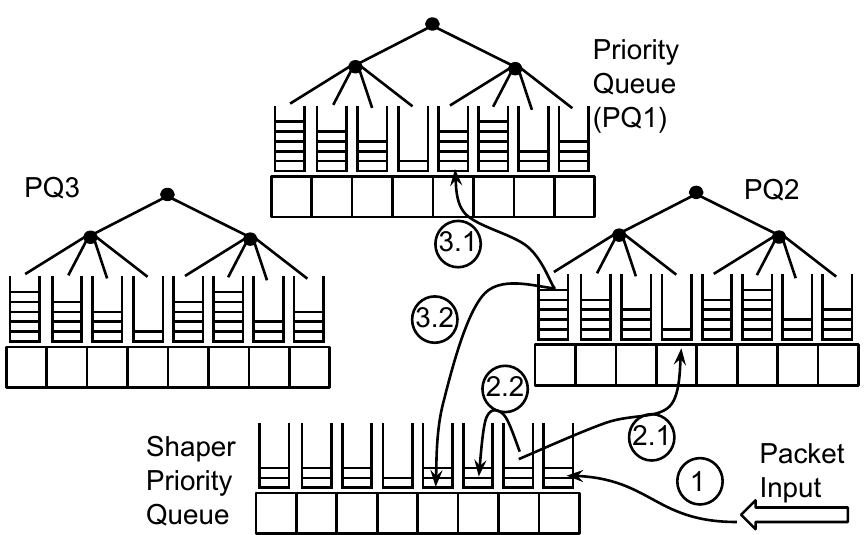}
\vspace{-0.1in}
\caption{A diagram of the implementation of the example in Figure~\ref{fig:shaping_policy}.}
\label{fig:shaping_policy_2}
\vspace{-0.1in}
\end{figure}

As an example, consider the hierarchical policy in Figure~\ref{fig:shaping_policy}. {\color{black}Each node represents a policy-defined flow with the root representing the aggregate traffic. Each node has a share of its parent's bandwidth, defined by the fraction in the figure. Each node can also have a policy-defined rate limit. In this example, we have a rate limit at a non-leaf node and a leaf node. Furthermore, we require the aggregate traffic to be paced. We map the hierarchical policy in Figure~\ref{fig:shaping_policy} to its priority-queue-based realization in Figure~\ref{fig:shaping_policy_2}.} Per the PIFO model, each non-leaf node is represented by a priority queue. {\color{black} Per our proposal, a single shaper is added to rate limit all packets according to all policy-defined rate limits.} 

{\color{black}To illustrate how this single shaper works,}
consider packets belonging to the rightmost leaf policy. {\color{black} We explore the journey of packets belonging to that leaf policy through the different queues, as shown in Figure~\ref{fig:shaping_policy_2}. These packets will be enqueued to the shaper with timestamps set based on a 7 Mbps rate to enforce the rate on their node (step 1).  Once dequeued from the shaper, each packet will be enqueued to PQ2 (step 2.1) and the shaper according to the 10 Mbps rate limit (step 2.1). After the transmission time of a packet belonging to PQ2 is reached, which is defined by the shaper, the packet is inserted in both the root's (PQ1) priority queue (3.1) and the shaper according to the pacing rate (3.2). When the transmission time, calculated based on the pacing rate, is reached the packet is transmitted.
To achieve this functionality, each packet holds a pointer to the priority queue they should be enqueued to. This pointer avoids searching for the queue a packet should be enqueued to. Note that having the separate shaper allows for specifying rate limits on any node in the hierarchical policy (e.g., the root and leaves) which was not possible in the PIFO model, where shaping transactions are tightly coupled with scheduling transactions.}  

\vspace{-0.1in}

\section{Eiffel Implementation}
\label{sec:impl}
\vspace{-0.1in}
Packet scheduling is implemented in two places in the network: 1) hardware or software switches, and 2) end-host kernel. We focus on the software placements (kernel and userspace switches) and show that Eiffel can outperform the state of the art in both settings. We find that userspace and kernel implementations of packet scheduling face significantly different challenges as the kernel operates in an event-based setting while userspace operates in a busy polling setting. We explain here the differences between both implementations and our approach to each. We start with our approach to policy creation.

\textbf{Policy Creation:} We extend the existing PIFO open source model to configure the scheduling algorithm \cite{pifo, pifo_github}. The existing implementation represents the policy as a graph using the DOT description language and translates the graph into C++ code. We rely on the cFFS for our implementation, unless otherwise stated. This provides an initial implementation which we tune according to whether the code is going to be used in kernel or userspace. We believe automating this process can be further refined, but the goal of this work is to evaluate the performance of Eiffel algorithms and data structures.

\textbf{Kernel Implementation}  We implement Eiffel as a qdisc \cite{hubert2002linux} kernel module that implements enqueue and dequeue functions and keeps track of the number of enqueued packets. The module can also set a timer to trigger dequeue. Access to qdiscs is serialized through a global qdisc lock. In our design, we focus on two sources of overhead in a qdisc: 1) the overhead of the queuing data structure, and 2) the overhead of properly setting the timer. Eiffel reduces the first overhead by utilizing one of the proposed data structures to reduce the cost of both enqueue and dequeue operations. The second overhead can be mitigated by improving the efficiency of finding the smallest deadline of an enqueued packet. This operation of \verb|SoonestDeadline()| is required to efficiently set the timer to wake up at the deadline of the next packet. Either of our supported data structures can support this operation efficiently as well.

\textbf{Userspace Implementation} We implement Eiffel in the Berkeley Extensible Software Switch (BESS, formerly SoftNIC \cite{softnic}). BESS represents network processing elements as a pipeline of modules. BESS is busy polling-based where a set of connected modules form a unit of execution called a task. A scheduler tracks all tasks and runs them according to assigned policies. Tasks are scheduled based on the amount of resources (CPU cycles or bits) they consume. 
Our implementation of Eiffel in BESS is done in self-contained modules. 

We find that two main parameters determine the efficiency of Eiffel in BESS: 1) batch size and 2) queue size. Batching is already well supported in BESS as each module receives packets in batches and passes packets to its subsequent module in a batch. However, we find that batching per flow has an intricate impact on the performance of Eiffel. For instance, with small packet sizes, if no batching is performed per flow, then every incoming batch of packets will activate a large number of queues without any of the packets being actually queued (due to small packet size) which increases the overhead per packet (i.e., queue lookup of multiple queues rather than one). This is not the case for large packet sizes where the lookup cost is amortized over the larger size of the packet improving performance compared to batching of large packets. Batching large packets results in large queues for flows (i.e., large number of flows with large number of enqueued packets). We find that batching should be applied based on expected traffic pattern. For that purpose, we setup \verb|Buffer| modules per traffic class before Eiffel's module in the pipeline when needed. We also perform output batching per flow in units of 10KB worth of payload which was suggested as a good threshold that does not affect fairness at a macroscale between flows \cite{billaud2013hclock}. We also find that limiting the number of packets enqueued in Eiffel can significantly affect the performance of Eiffel in BESS. This is something we did not have to deal with in the kernel implementation because of TCP Small Queue (TSQ) \cite{tsq} which limits number of packets per TCP flow in kernel stack. We limit the number of packets per flow to 32 packets which we find, empirically, to maintain performance. 
\vspace{-0.1in}

\section{Evaluation}
\label{sec:eval}
\vspace{-0.1in}

We evaluate Eiffel through three use cases. Then, we perform microbenchmarks to evaluate different data structures and measure the effect of approximation on network wide objectives. Finally, we present a guide for picking a data structure.

\vspace{-0.1in}

\subsection{\algoname{} Use Cases}
\vspace{-0.1in}

\textbf{Methodology:} We evaluate our kernel and userspace implementation through a set of use cases each with its corresponding baseline.  We implement three common use cases, one  in kernel and two in userspace.  In each use case, we evaluated Eiffel's scheduling behavior as well as its CPU performance as compared to the baseline. {\color{black}The comparison of scheduling behavior was done by comparing aggregate rates achieved as well as order of released packets.} However, we only report CPU efficiency results as we find that Eiffel matches the scheduling behavior of the baselines. 

{\color{black}A key aspect of our evaluation is determining the metrics of comparisons in kernel and userspace settings. The main difference is that a kernel module can support line rate by using more CPU. This requires us to fix the packet rate we are evaluating at and look at the CPU utilization of different scheduler implementations. On the other hand, a userspace implementation relies on busy polling on one or more CPU cores to support different packet rates. Hence, in the case of userspace, we fix the number of cores used, to one core unless otherwise is stated, and compare the different scheduler implementations based on the maximum achievable rate. The following use cases cover the two settings, presenting the proper evaluation metrics for each setting. We find that for all use cases Eiffel reduces the CPU overhead per packet which leads to lower CPU utilization in kernel settings and higher maximum achievable rate in userspace settings.}

\vspace{-0.1in}

\subsubsection{Use Case 1: Shaping in Kernel}
\vspace{-0.08in}

Traffic shaping (i.e., rate limiting and pacing) is an essential operation for efficient utilization \cite{need_pacing} and correct operation of modern protocols (e.g., both TIMELY \cite{timely} and BBR \cite{bbr} require per flow pacing). Recently, it has been shown that the canonical kernel shapers (i.e., FQ/pacing \cite{fq} and HTB qdiscs \cite{linuxhtb}) are inefficient due to reliance on inefficient data structures; they are outperformed by the userspace-based implementation in Carousel \cite{carousel}. To offer a fair comparison we implement all systems in the kernel.

We implement a rate limiting qdisc whose functionality matches the rate limiting features of the existing FQ/pacing qdisc \cite{fq}. In particular, we allow a flow to set a \verb|SO_MAX_PACING_RATE| parameter which is a rate limit for the flow.
Furthermore, we also calculate a pacing rate limit for each TCP flow in case of unspecified maximum rate. We also implement a Carousel-based qdisc, in accordance with recommendations of the authors of the Carousel paper. We implement a qdisc  where all packets are queued in a timing wheel. A timer fires every time instant (according to the granularity of the timing wheel) and checks whether it has packets that should be sent.  

\begin{figure}[t]
\centering
\includegraphics[width=0.95\linewidth]{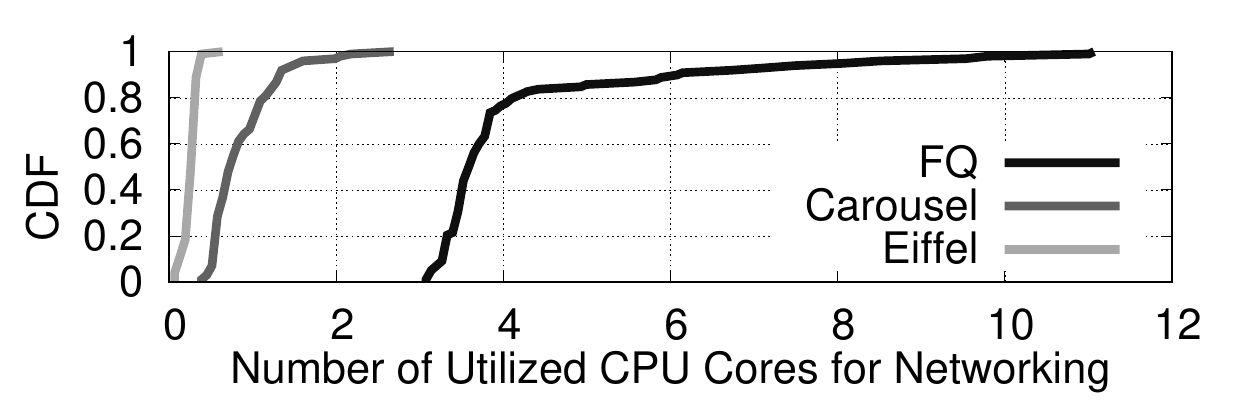}
\vspace{-0.1in}
\caption{A comparison between the CPU overhead of the networking stack using FQ/pacing, Carousel, and Eiffel.
}
\label{fig:kernel_network}
\vspace{-0.1in}
\end{figure}

\begin{figure}[t]
\centering
\includegraphics[width=0.492\linewidth]{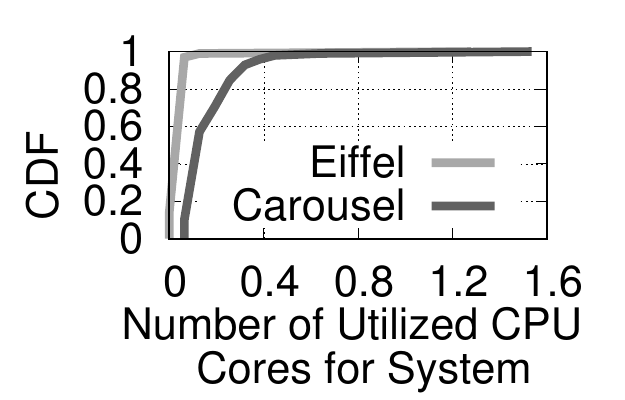}
\includegraphics[width=0.492\linewidth]{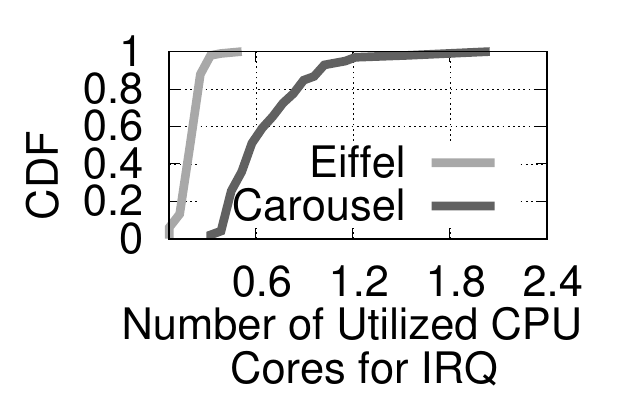}
\vspace{-0.1in}
\caption{A Comparison between detailed CPU utilization of Carousel and Eiffel in terms of system processes (left) and soft interrupt servicing (right).}
\label{fig:kerne_network_detailed}
\vspace{-0.1in}
\end{figure}

We implemented \algoname{} as a qdisc. 
The queue is configured with 20k buckets with a maximum horizon of 2 seconds and  only the shaper is used. We implemented the qdisc in kernel v4.10. We modified only \verb|sock.h| to keep the state of each socket allowing us to avoid having to keep track of each flow in the qdisc. We conduct experiments for egress traffic shaping between two servers within the same cluster in Amazon EC2. We use two \verb|m4.16xlarge| instances equipped with 64 cores and capable of sustaining 25 Gbps. We use \verb|neper| \cite{neper_github} to generate traffic with a large number of TCP flows. In particular, we generate traffic from 20k flows and use \verb|SO_MAX_PACING_RATE| to rate limit individual flows to achieve a maximum aggregate rate of 24 Gbps. This configuration constitutes a worst case in terms of load for all evaluated qdiscs as it requires the maximum amount of calculations. We measure overhead in terms of the number of cores used for network processing which we calculate based on the observed fraction of CPU utilization. Without \verb|neper| operating, CPU utilization is zero, hence, we attribute any CPU utilization during our experiments to the networking stack, except for the CPU portion attributed to userspace processes. We track CPU utilization using \verb|dstat|. We run our experiments for 100 seconds and record the CPU utilization every second. This continuous behavior emulates the behavior handled by content servers which were used to evaluate Carousel \cite{carousel}. 

Figure~\ref{fig:kernel_network} shows the overhead of all three systems. It is clear that Eiffel is superior, outperforming FQ by a median 14x and Carousel by 3x. We find the overhead of FQ to be consistent with earlier results \cite{carousel}. This is due to its complicated data structure which keeps track internally of active and inactive flows and requires continuous garbage collection to remove  old inactive flows. Furthermore, it relies on RB-trees which increases the overhead of reordering flows on every enqueue and dequeue. To better understand the comparison with Carousel, we look at the breakdown of the main components of CPU overhead, namely overhead spent on \textit{system processes} and \textit{servicing software interrupts}. Figure~\ref{fig:kerne_network_detailed} details the comparison. We find that the main difference is in the overhead introduced by Carousel in firing timers at constant intervals while Eiffel can trigger timers exactly when needed (Figure~\ref{fig:kerne_network_detailed} right). The overhead of the data structures in both cases introduces minimal overhead in system processes (Figure~\ref{fig:kerne_network_detailed} left). 
\vspace{-0.1in}
\subsubsection{Use Case 2: Flow Ranking Policy in Userspace}
\vspace{-0.1in}

\begin{figure}[!t]
{
\begin{python}
#On enqueue of packet p of flow f:
f = flow(p)
# Shaping transaction ranking
f.r_rank = f.r_rank + \
    p.size / f.reservation
f.l_rank = f.l_rank + p.size / f.limit
# Scheduling transaction ranking
f.s_rank = f.s_rank + p.size / f.share

\end{python}
}
\vspace{-0.1in}
\caption{Implementation of hClock in Eiffel.}
\vspace{-0.1in}
\label{fig:listing_hclock}
\end{figure}

We demonstrate the efficiency of Eiffel's Per-Flow Rank programming abstraction by implementing hClock \cite{billaud2013hclock}, a hierarchical packet scheduler that supports limits, reservations, and proportional shares. hClock is the main implementation of scheduling in NetIOC in VMware's vSphere \cite{vmware_2} and relies on min heaps the incur $O(\log n)$  overhead per packet batch \cite{billaud2013hclock}. We implement hClock based on its original specs and compare that with hClock using Eiffel. Both versions are implemented in BESS; each in a self-contained single module. We also attempt to replicate hClock's behavior using the traffic control ({\em tc}) mechanisms in BESS. However, this requires instantiating a module corresponding to every flow which incurs a large overhead for a large number of flows.

Figure~\ref{fig:listing_hclock} shows the implementation of hClock using PIFO model along with Eiffel extensions. hClock relies on three ranks to determine order of packets: \verb|r_rank| corresponds to the reservation of the flow which is the ranking used to order flows in the arbitrary shaper, \verb|s_rank| is calculated based on the share of a flow compared to its siblings in the tree and this rank is used in the scheduling transaction, {\color{black}and \verb|l_rank| is the rank of a packet based on the rate limit of its flow. The packet dequeued from the shaper belongs to the flow with the smallest \verb|r_rank| smaller than the current time whose \verb|l_rank| is also smaller than current time. Allowing each flow to have multiple ranks as such creates a complication not capture by the Eiffel syntax. Hence, this complication requires adding some code manually. However, it does not hurt efficiency.}

\begin{figure}[t]
\centering
\includegraphics[width=0.98\linewidth]{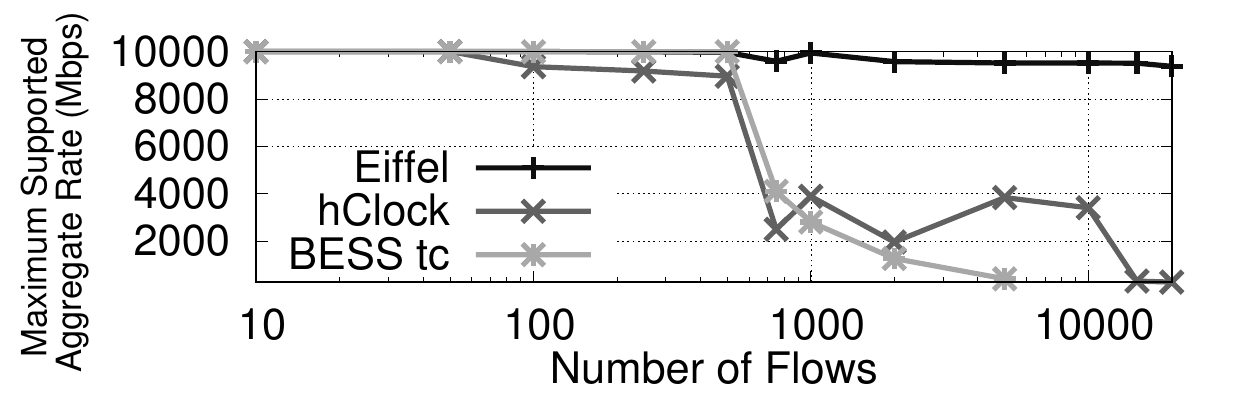}
\includegraphics[width=0.98\linewidth]{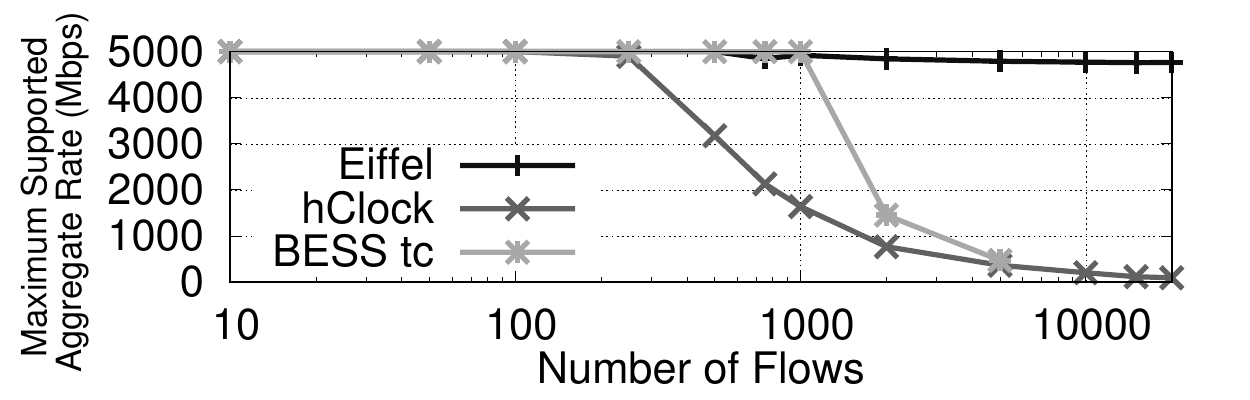}
\vspace{-0.1in}
\caption{Comparison between maximum supported aggregate rate limit (top) and behavior at a rate limit of 5 Gbps (bottom) for hClock, Eiffel's implementation of hClock, and BESS tc on a single core with no batching.}
\label{fig:bess_batching}
\vspace{-0.1in}
\end{figure}

\begin{figure}[t]
\centering
\includegraphics[width=0.98\linewidth]{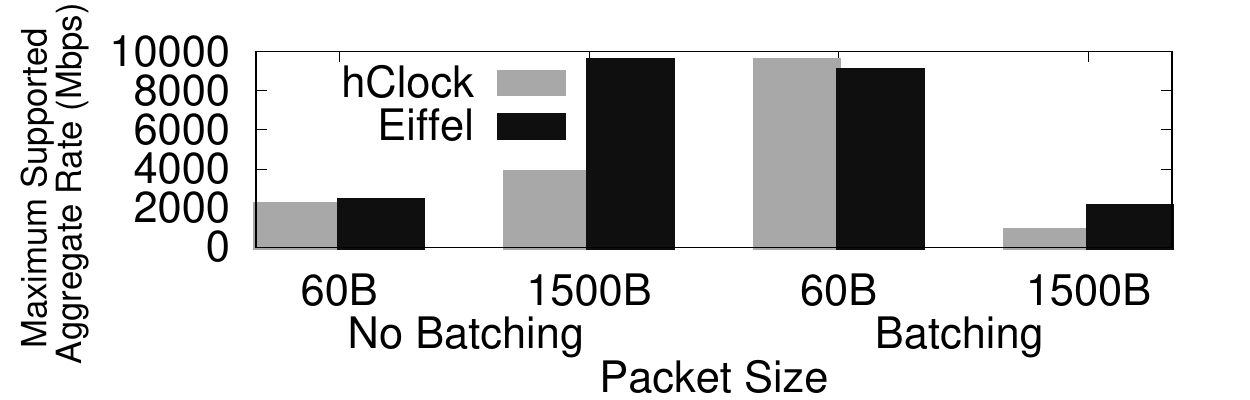}
\vspace{-0.1in}
\caption{Effect of batching and packet size on throughput for both Eiffel and hClock for 5k flows.}
\vspace{-0.1in}
\label{fig:bess_flows_hclock}
\end{figure}

We conduct our experiments on lab machines equipped with Intel Xeon CPU ES-1620 with 8 cores, 64GB of memory, and Intel X520-SR2 dual port NICs capable of an aggregate of 20 Gbps. We use a simple packet generator implemented in BESS and a simple round robin annotator to distribute packets over traffic classes. We change the number of traffic classes (i.e., flows) and measure the maximum aggregate rate. To demonstrate the efficiency of Eiffel, we conduct all experiments on a single core. We run each experiment for 10 seconds and plot the average observed throughput at the port of the sender which matches the rate at the receiver.

Figure~\ref{fig:bess_batching} shows the results of our experiments for varying number of flows. Packets are 1500B which is MTU size. We run experiments at a line rate of 10 Gbps and with a rate limit of 5 Gbps. It is clear that Eiffel can support line rate at up to 40x the number of flows compared to hClock at both aggregate rate limits and even larger advantage compared to tc in BESS. We find that all three implementations scale well with large number of flows when operating on two cores. However, it is clear that using Eiffel results in significant savings.


We also investigate the combined effects of packet size and per-flow batching (Figure~\ref{fig:bess_flows_hclock}). As discussed earlier, performance for small packet sizes requires batching to improve performance due to memory efficiency (i.e., all packets within the same batch are queued in the same place). However, for large packet sizes the value of batching is offset by the overhead of large queue lengths. Our results validate this hypothesis showing that when per-flow batching is applied with small packet sizes both hClock and Eiffel can maintain performance close to line rate with Eiffel performing worse than hClock by 5-10\%. For per-flow batching with large packet sizes, per-flow queues are larger causing performance degradation. We note that the typical behavior of a network will avoid per-flow batching, as it might cause unnecessary latency. In that case, Eiffel outperforms hClock and can sustain line rate for large packet sizes, which should also be the common case.




\begin{figure}[!t]
{
\begin{python}
#On enqueue of packet p of flow f:
f.rank = min(p.rank, f.rank)
#On dequeue of packet p of flow f:
f.rank = min(p.rank, f.front().rank)
\end{python}
}
\vspace{-0.1in}
\caption{Implementation of pFabric in Eiffel.}
\vspace{-0.1in}
\label{fig:listing_pfabric}
\end{figure}


\vspace{-0.1in}

\subsubsection{Use Case 3: Least/Largest X First in Userspace}
\vspace{-0.1in}

\begin{figure}[t]
\centering
\includegraphics[width=0.98\linewidth]{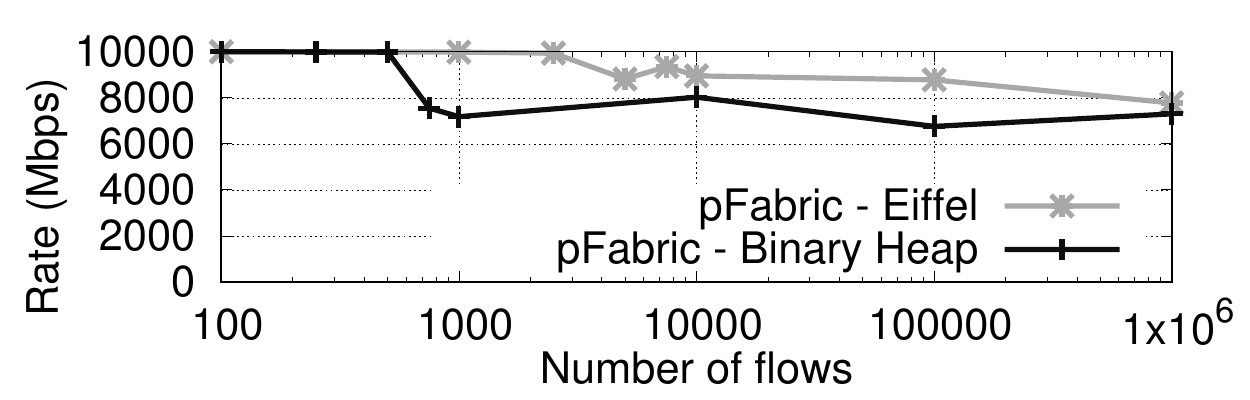}
\vspace{-0.1in}
\caption{Performance of pFabric implementation using cFFS and a binary heap showing  Eiffel sustaining line rate at 5x number of flows.}
\vspace{-0.1in}
\label{fig:bess_pfabric}
\end{figure}

One of the most widely used patterns for packet scheduling is ordering packets such that the flow or packet with the least or most of some feature exits the queue first. Many examples of such policies have been promoted including Least Slack Time First (LSTF) \cite{UPS}, Largest Queue First (LQF), and Shortest/Least Remaining Time First (SRTF). We refer to this class of algorithms as L(X)F. This class of algorithms is interesting as some of them were shown to provide theoretically proven desirable behavior. For instance, LSTF was shown be a universal packet scheduler that can emulate the behavior of any scheduling algorithm \cite{UPS}. Furthermore, SRTF was shown to schedule flows close to optimally within the pFabric architecture \cite{pfabric}. We show that Eiffel can improve the performance of this class of scheduling algorithms.

We implement pFabric as an instance of such class of algorithms where flows are ranked based on their remaining number of packets. Every incoming and outgoing packet changes the rank of all other packets belonging to the same flow, requiring on dequeue ranking. Figure~\ref{fig:listing_pfabric} shows the representation of pFabric using the PIFO model with per-flow ranking and on dequeue ranking provided by Eiffel. We also implemented pFabric using $O(\log n)$ priority queue based on a Binary Heap to provide a baseline. Both designs were implemented as queue modules in BESS. We used packets of size 1500B. Operations are all done on a single core with a simple flow generator. All results are the average of ten experiments each lasting for 20 seconds. Figure~\ref{fig:bess_pfabric} shows the impact of increasing the number of flows on the performance of both designs. It is clear that Eiffel has better performance. The overhead of pFabric stems from the need to continuously move flows between buckets which has $O(1)$ using bucketed queues while it has an overhead of $O(n)$ as it requires re-heapifying the heap every time. {\color{black} The figure also shows that as the number of flows increases the value of Eiffel starts to decrease as Eiffel reaches its capacity.}

\vspace{-0.1in}

\subsection{\algoname{} Microbenchmark}
\label{sec:eval_2}
\vspace{-0.08in}

Our goal in this section is evaluate the impact of different parameters on the performance of different data structures. We also evaluate the effect of approximation in switches on network-wide objectives. Finally, we provide guidance on how one should choose among the different queuing data structures within Eiffel, given specific scheduler user-case characteristics. To inform this decision we run a number of microbenchmark experiments. We start by evaluating the performance of the proposed data structures compared to a basic bucketed priority queue implementation. Then, we explore the impact of approximation using the gradient queue both on a single queue and at a large network scale through ns2-simulation. Finally, we present our guide for choosing a priority queue implementation. 

\textbf{Experiment setup:} We perform benchmarks using Google's benchmark tool \cite{benchmark_github}. We develop a baseline for bucketed priority queues by keeping track of non-empty buckets in a binary heap, we refer to this as BH. We ignore comparison-based priority queues (e.g., Binary Heaps and RB-trees) as we find that bucketed priority queues performs 6x better in most cases. We compare cFFS, approximate gradient queue (Approx), and BH. In all our experiments, the queue is initially filled with elements according to queue occupancy rate or average number of packet per bucket parameters. Then, packets are dequeued from the queue. Reported results  (i.e., y-axis of figures~\ref{fig:packet_per_bucket} and \ref{fig:queue_occu}) are in terms of million packets per seconds.

\begin{figure}[t]
\centering
\includegraphics[width=0.475\linewidth]{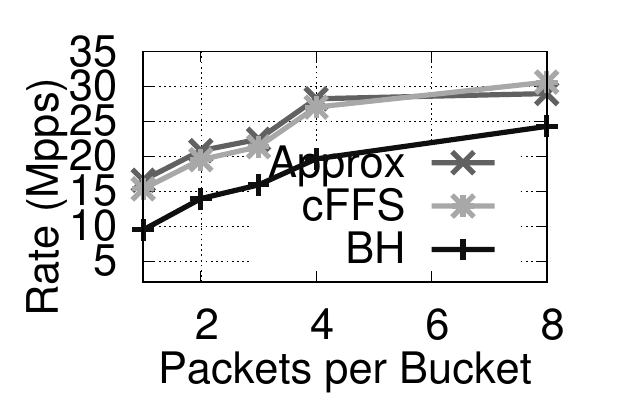}
\includegraphics[width=0.475\linewidth]{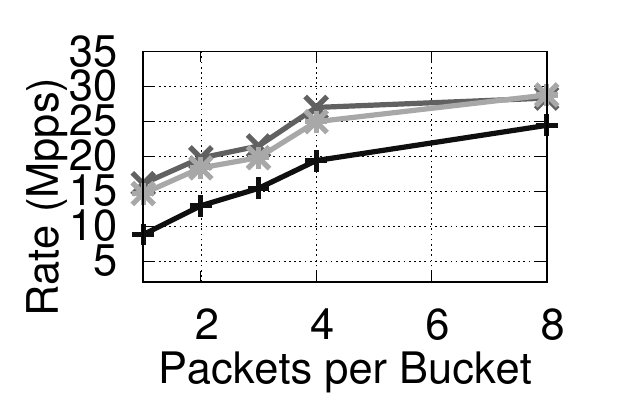}
\vspace{-0.1in}
\caption{Effect of number of packets per bucket on queue performance for 5k (left) and 10k (right) buckets.}
\vspace{-0.1in}
\label{fig:packet_per_bucket}
\end{figure}

\begin{figure}[t]
\centering
\includegraphics[width=0.475\linewidth]{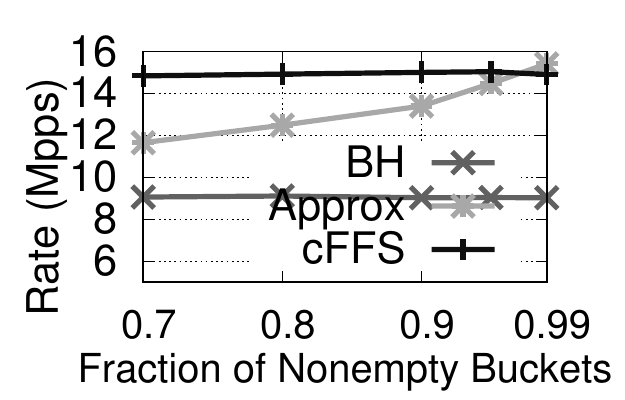}
\includegraphics[width=0.475\linewidth]{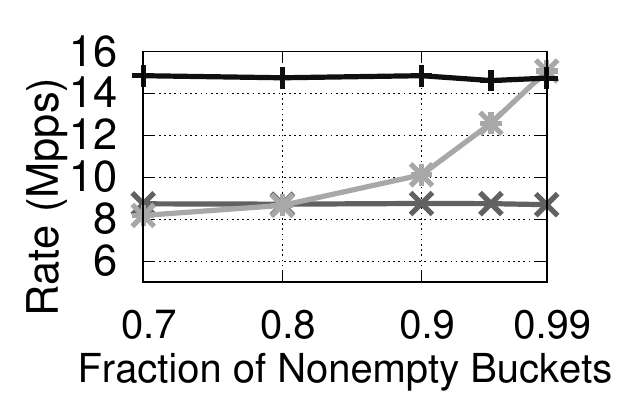}
\vspace{-0.1in}
\caption{Effect of queue occupancy on performance of Approximate Queue for 5k (left) and 10k (right) buckets.}
\vspace{-0.1in}
\label{fig:queue_occu}
\end{figure}

\begin{figure}[t]
\centering
\includegraphics[width=0.95\linewidth]{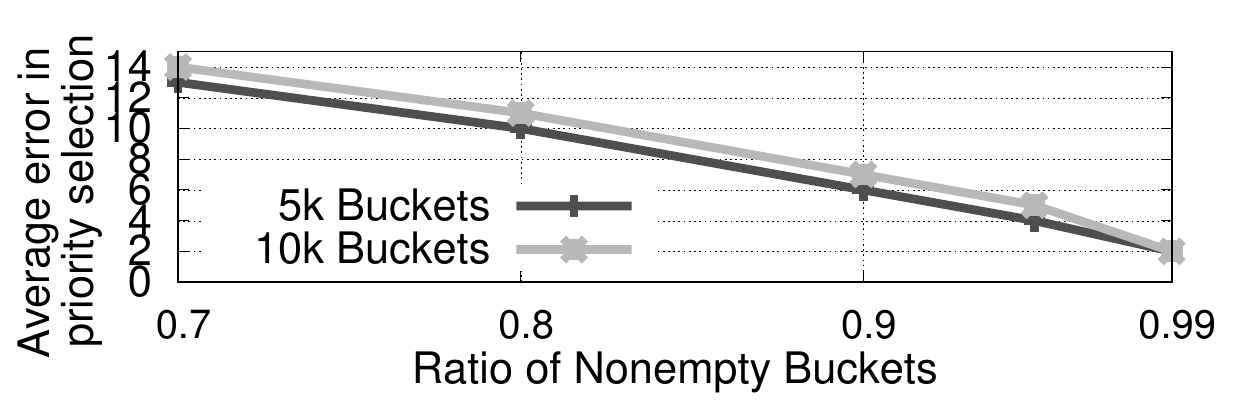}
\vspace{-0.1in}
\caption{Effect of having empty buckets on the error of fetching the minimum element for the approximate queue.}
\vspace{-0.1in}
\label{fig:error}
\end{figure}

\textbf{Effect of number of packet per bucket:}  The number of buckets configured in a queue is the main determining factor for the overhead of a bucketed queue. Note that this parameter controls queue granularity which is the priority interval covered by a bucket. High granularity (i.e., large number of buckets) implies a smaller number of packets per bucket for the same workload. Hence, the number of packets per bucket is a good proxy to the configured number of buckets. {\color{black} For instance, if we choose a large number of buckets with high granularity, the chance of empty buckets increases. On the other hand, if we choose a small number of buckets with coarser granularity, we get higher number of elements per bucket. This proxy is important because in the case of the approximate queue, the main factor affecting its performance is the number of empty buckets.}

Figure~\ref{fig:packet_per_bucket} shows the effect of increasing the average number of packets per bucket for all three queues for 5k and 10k buckets. For a small number of packets per bucket, which also reflects choosing a fine grain granularity, the approximate queue introduces up to 9\% improvement in performance in the case of 10k buckets. In such cases, the approximate queue function has zero error which makes it significantly better. As the number of the packets per bucket increases, the overhead of finding the smallest indexed bucket is amortized over the total number of elements in the bucket which makes FFS-based and approximate queues similar in performance.

\begin{figure}[t]
\centering
\includegraphics[width=0.95\linewidth]{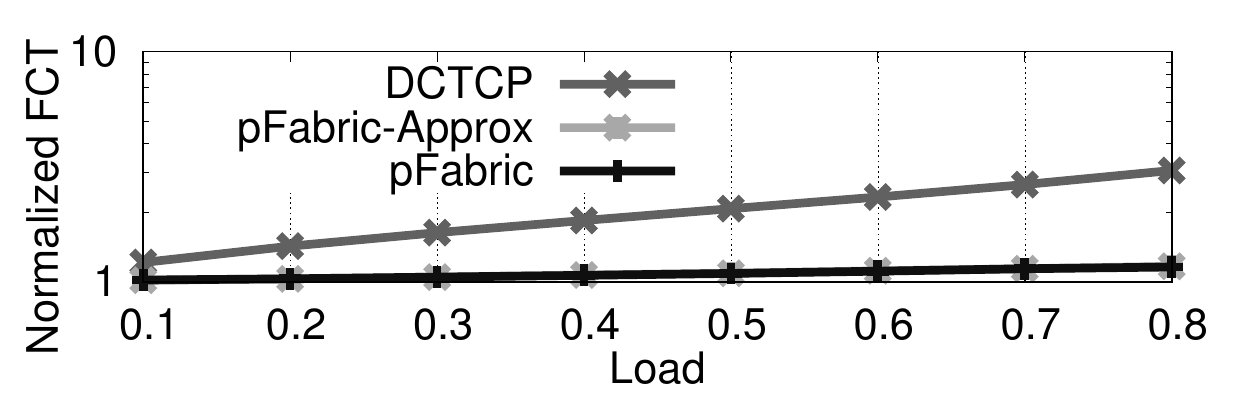}
\includegraphics[width=0.95\linewidth]{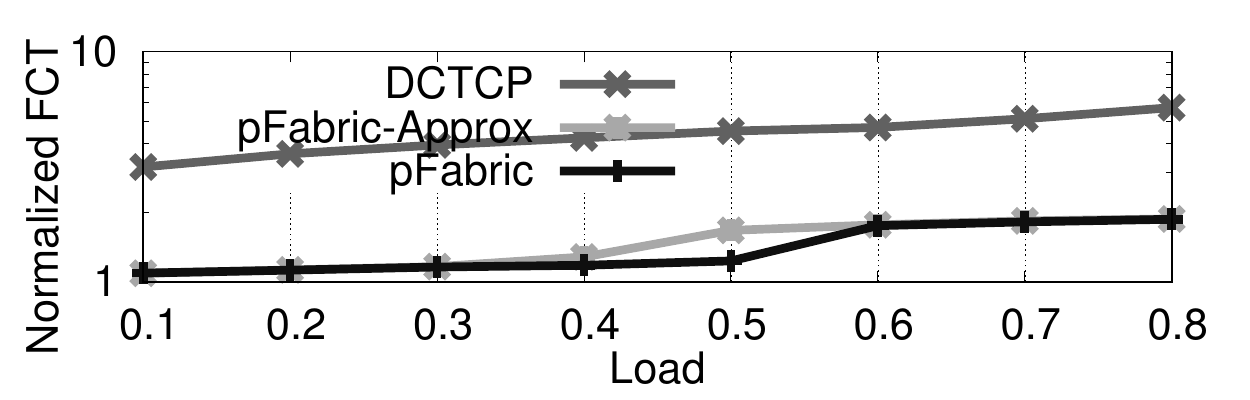}
\includegraphics[width=0.95\linewidth]{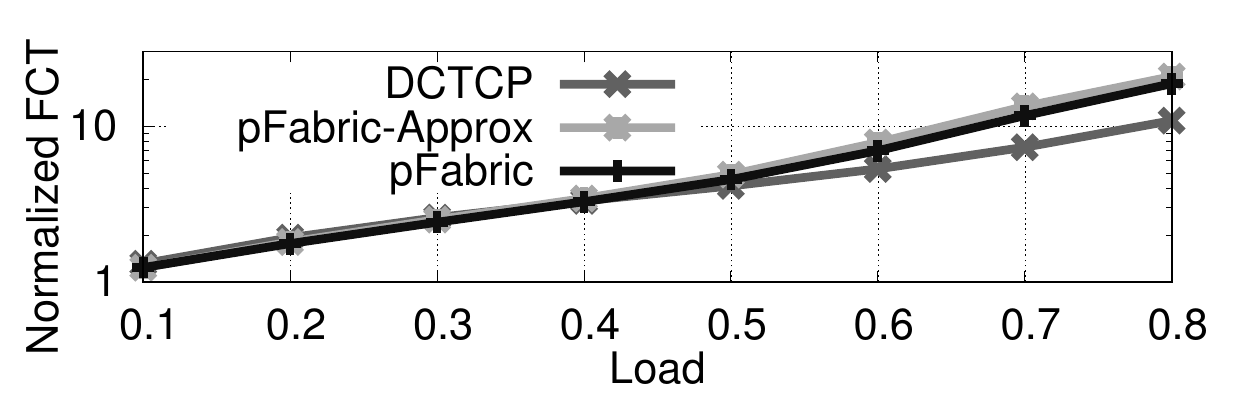}
\vspace{-0.1in}
\caption{Effect of using an Approximate Queue 
on the performance of pFabric in terms of normalized flow completion times 
under different load characteristics: Average FCT for (0, 100kB] flow sizes, 99th percentile FCT for (0, 100kB] for sizes, and Average FCT for (10MB, $\inf$) flow sizes.}
\vspace{-0.1in}
\label{fig:pfabric}
\end{figure}

We also explore the effect of having empty buckets on the performance of the approximate queue. Empty buckets cause errors in the curvature function of the approximate queue which in turn trigger linear search for non-empty buckets. Figure~\ref{fig:packet_per_bucket}  shows throughput of the queue for different ratios of non-empty buckets. As expected, as the ratio increases the overhead decreases which improves the throughput of the approximate queue. Figure~\ref{fig:error} shows the error in the approximate queue's fetching of elements. As the number of empty buckets increases the error in the approximate queue is larger and the overhead of linear search grows. We suggest that cases where the queue is more than 30\% empty should trigger changes in the queue's granularity based on the queue's CPU performance and to avoid allocating memory to buckets that are not used.

The granularity of the queue determines the representation capacity of the queue. It is clear for our results that picking low granularity (i.e., high number of packets per bucket) yields better performance in terms of packets per second. On the other hand, from a networking perspective, high granularity yields exact ordering of packets. For instance, a queue with a granularity of 100 microseconds cannot insert gaps between packets that are smaller than 100 microseconds. Hence, we recommend configuring the queue's granularity such that each bucket has at least one packet. This can be determined by observing the long term behavior of the queue. We also note that this problem can be solved by having non-uniform bucket granularity which is dynamically set to achieve the result of at least one packet per bucket. We leave this problem for future work.

\textbf{Impact of Approximation on Network-wide Objectives:} A natural question is: how does approximate prioritization, at \textit{every} switch in a network, affect network-wide objectives? To answer that question, we perform simulations of pFabric, which requires prioritization at every switch. Our simulation are based on \verb|ns2| simulations provided by the authors of pFabric \cite{pfabric} and the plotting tools provided by the authors of QJump \cite{qjump}.  We change only the priority queuing implementation from a linear search-based priority queue to our Approximate priority queue and increase queue size to handle 1000k elements. We use DCTCP \cite{dctcp} as a baseline to put the result in context. Figure~\ref{fig:pfabric} shows a snapshot of results of the simulations of a  144 node leaf-spine topology. Due to space limitations, We show results for only web-search workload simulations which are based on clusters in Microsoft datacenters \cite{dctcp}. The load is varied between 10\% to 80\% of the load observed. We note that the setting of the simulations is not relevant for the scope of this paper, however, what is relevant is comparing the performance of pFabric using its original implementation to pFabric using our approximate queue. We find that approximation has minimal effect on overall network behavior which makes performance on a mircorscale the only concern in selecting a queue for a specific scheduler.

\begin{figure}[t]
\centering
\includegraphics[width=0.95\linewidth]{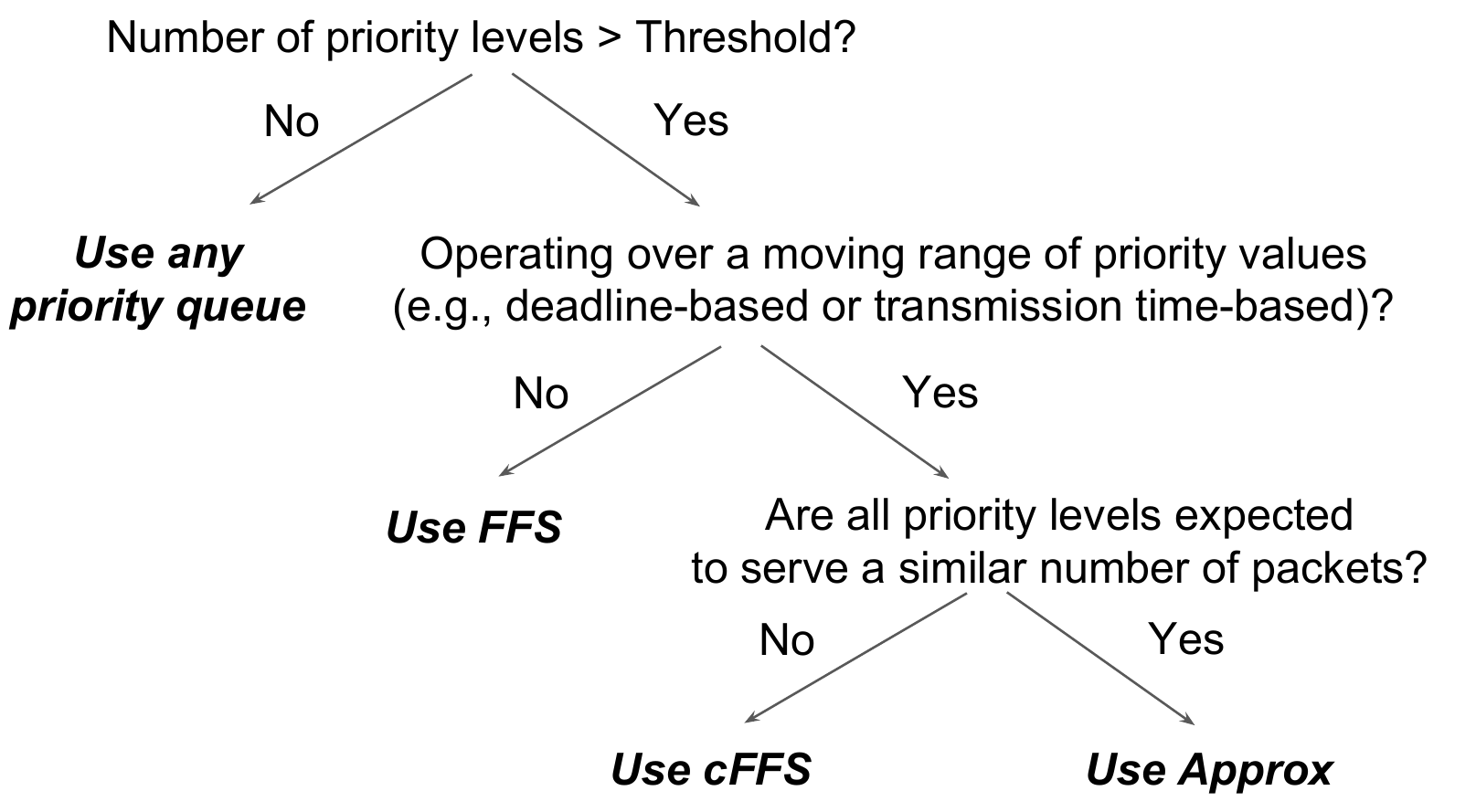}
\vspace{-0.1in}
\caption{Decision tree for selecting a priority queue based on the characteristics of the scheduling algorithm.}
\vspace{-0.1in}
\label{fig:guide}
\end{figure}

\textbf{A Guide for Choosing a Priority Queue for Packet Scheduling} Figure~\ref{fig:guide} summarizes our takeaways from working with the proposed queues. For a small number of priority levels, we find that the choice of priority queue has little impact and for most scenarios a bucket-based queue might be  overkill due to its memory overhead. However, when the number of priority levels or buckets is larger than a threshold the choice of queues makes a significant difference. We found in our experiments that this threshold is 1k and that the difference in performance is not significant around the threshold. We find that if the priority levels are over a fixed range (e.g., job remaining time \cite{pfabric}) then an FFS-based priority queue is sufficient. When the priority levels are over a moving range, where the number of levels are not all equally likely (e.g., rate limiting with a wide range of limits \cite{carousel}), it is better to use cFFS priority queue. However, for priority levels over a moving range with highly occupied priority levels (e.g., Least Slack Time-based \cite{UPS} or hierarchical-based schedules \cite{billaud2013hclock}) approximate queue can be beneficial.   

{\color{black} Another important aspect is choosing the number of buckets to assign to a queue. This parameter should be chosen based on both the desired granularity and efficiency which form a clear trade-off. Proposed queues have minimal CPU overhead (e.g., a queue with a billion buckets will require six bit operations to find the minimum non-empty bucket using a cFFS). Hence, the main source of efficiency overhead is the memory overhead which has two components: 1) memory footprint, and 2) cache freshness. However, we find that most scheduling policies require thousands to tens of thousands of elements which require small memory allocation for our proposed queues.}

\vspace{-0.1in}
{\color{black}
\section{Discussion}
\vspace{-0.1in}

\textbf{Impact of Eiffel on Next Generation Hardware Packet Schedulers:} Scheduling is widely supported in hardware switches using a fixed short list of scheduling policies, including shaping, strict priority, and Weighted Round Robin \cite{arista_1,arista_2,cisco_1,pifo}. Recently, programmable networking hardware has been moving steadily towards wide adoption and deployment over the past few years. For example, P4 \cite{Bosshart:2014:PPP:2656877.2656890} provides a data plane programming language following match action tables. Furthermore, P4 allows for setting meta data per packet which provide a way to schedule packets that relies on packet tagging to determine their relative ordering. However, P4, so far, still does not provide a programming model for packet scheduling \cite{DCP4}. PIFO \cite{pifo} proposed a hardware scheduler programming abstraction that can be integrated in a programmable data plane model. Other efforts for programmable network hardware include SmartNICs by Microsoft that leverages FPGAs to implement network logic in programmable NICs \cite{211249}.

We believe that the biggest impact Eiffel will have is making the case for a reconsideration of the basic building blocks of the packet schedulers in hardware. Current proposals for packet scheduling in hardware (e.g., PIFO model \cite{pifo} and SmartNICs \cite{211249}), rely on parallel comparisons of elements in a single queue. This approach limits the size of the queue. Earlier proposals that rely on pipelined-heaps \cite{bhagwan2000fast,4154755,wang2013per} required a priority queue that can capture the whole universe of possible packet rank values, which requires significant hardware overhead. We see Eiffel as a step on the road of improving hardware packet schedulers by reducing the number of parallel comparisons through an FFS-based queue meta data or through an approximate queue metadata. For instance, Eiffel can be employed in a hierarchical structure with parallel comparisons to increase the capacity of individual queues in a PIFO-like setting. Future programmable schedulers can implement a hardware version of cFFS or the approximate queue and provide an interface that allows for connecting them according to programmable policies. While the implementation is definitely not straight forward, we believe this to be the natural next step in the development of scalable packet schedulers.


\textbf{Impact of Eiffel on Network Scheduling Algorithms and Systems:} Performance and capacity of packet schedulers are key factors in designing a network-wide scheduling algorithm. It is not rare that algorithms are significantly modified to map a large number of priority values to the eight priority levels offered in IEEE 802.1Q switches \cite{1637340} (e.g., Qjump \cite{qjump}). Otherwise, algorithms are deemed unrealistic when they require priority queues with large number of priorities. We believe Eiffel can enable a reexamination of the approach to the design of such algorithms taking into account that complex scheduling can be performed using Eiffel in software in Virtual Network Functions (VNF). Hence, in settings that tolerates a bit of latency (e.g., systems already employing latency), network scheduling can be moved from the hardware to software. On the other hand, if the network is not congested, such scheduling can be performed at endhosts \cite{UPS}.
}
\vspace{-0.1in}

\section{Conclusion}
\label{sec:conc}
\vspace{-0.1in}
Efficient packet scheduling is a crucial mechanism for the correct operation of networks. Flexible packet scheduling is a necessary component of the current ecosystem of programmable networks. 
In this paper, we showed how Eiffel can introduce both efficiency and flexibility for packet scheduling in software relying on integer priority queuing concepts and novel packet scheduling programming abstractions. We showed that Eiffel can achieve orders of magnitude improvements in performance compared to the state of the art while enabling packet scheduling at scale in terms of both number of flows or rules and line rate. We believe that our work should enable network operators to have more freedom in implementing complex policies that correspond to current networks needs where isolation and strict sharing policies are needed. Eiffel also makes the case for a reconsideration of the basic building blocks of packet schedulers which should motivate future work on schedulers implemented in hardware in NICs and switches.

{\footnotesize  \bibliography{sigproc}
 \bibliographystyle{acm}}


\appendix                                     
\section{Gradient Queue Correctness}
\label{app:thm}
\begin{theorem}
The index of the maximum non-empty bucket, $N$, is $ceil ({b}/{a}) $.
\label{thm1}
\end{theorem}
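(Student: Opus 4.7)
The plan is to let $i_1 < i_2 < \cdots < i_k = N$ denote the indices of the non-empty buckets and rewrite the quantity $b/a$ as a weighted average:
\[
\frac{b}{a} \;=\; \frac{\sum_{j=1}^{k} i_j\, 2^{i_j}}{\sum_{j=1}^{k} 2^{i_j}},
\]
so $b/a$ is a convex combination of the indices $i_1,\ldots,i_k$ with weights $2^{i_j}$. From this formulation, I aim to show $N-1 < b/a \le N$, which immediately yields $\lceil b/a\rceil = N$.

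The upper bound is the easy half: since every $i_j\le N$, the weighted average cannot exceed $N$, and equality holds only in the degenerate case $k=1$ (a single non-empty bucket), where trivially $\lceil b/a\rceil = i_1 = N$. The substantive step is the strict lower bound $b/a > N-1$. Equivalently, I would show the integer-free inequality $b - (N-1)a > 0$, rewrite
\[
b - (N-1)a \;=\; 2^{N} \;-\; \sum_{j<k} (N-1-i_j)\, 2^{i_j},
\]
and argue that the subtracted term is strictly less than $2^{N}$. This exploits the same ``dominance of the leading bit'' intuition that underlies the FFS-based queue: the weight $2^{N}$ assigned to the top bucket outweighs the combined pull of every lower bucket.

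The main obstacle is bounding the sum $\sum_{j<k}(N-1-i_j)\,2^{i_j}$, and the extremal case is when \emph{every} bucket $0,1,\ldots,N-1$ is non-empty, since adding buckets only increases the sum. For that worst case I would substitute $m = N-1-i$ to obtain $2^{N-1}\sum_{m=0}^{N-1} m/2^{m}$ and invoke the closed form $\sum_{m=0}^{n} m/2^{m} = 2 - (n+2)/2^{n}$ (derivable from $\sum_{m\ge 0} x^m = 1/(1-x)$ by differentiation). This gives $\sum_{i=0}^{N-1}(N-1-i)2^{i} = 2^{N} - (N+1)$, which is strictly less than $2^{N}$, so $b-(N-1)a \ge N+1 > 0$ in the worst case and a fortiori in all others.

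Combining the two bounds, $N-1 < b/a \le N$, which forces $\lceil b/a \rceil = N$. The one bookkeeping subtlety I would flag is the edge case $k=1$, where $b/a$ equals $N$ exactly rather than lying strictly below it; the ceiling still returns $N$, so the statement holds uniformly. No induction is needed — the argument is a direct two-sided estimate on the weighted average, with the geometric-series identity doing the heavy lifting on the lower side.
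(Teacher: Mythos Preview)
Your proof is correct and lands on the same two-sided estimate $N-1 < b/a \le N$ that the paper establishes, but the mechanics differ slightly. The paper argues via a monotonicity lemma on the ratio itself: it asserts that removing any lower bucket $j$ strictly increases $b/a$ (i.e., $\frac{b-j2^{j}}{a-2^{j}} > \frac{b}{a}$), so the all-ones configuration is the global minimum and the single-bucket configuration is the global maximum; it then evaluates $b/a$ in closed form at those two extremes. You instead exploit the convex-combination interpretation to get the upper bound for free, and for the lower bound you work with the linear quantity $b-(N-1)a$ rather than the ratio, observing that the subtracted sum $\sum_{j<k}(N-1-i_j)2^{i_j}$ has non-negative terms and is therefore maximized when every bucket below $N$ is present. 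Both routes reduce to the same arithmetico-geometric series in the extremal case; your linear reformulation avoids the separate ratio-monotonicity step that the paper leaves as ``straightforward to show,'' while the paper's monotonicity statement is a slightly stronger structural fact than what the theorem strictly needs.
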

\begin{proof} We encode the occupancy of buckets by a bit string of length $N$ where zeros represent empty buckets and ones represent nonempty buckets. The value of the bit string is the value of the critical point $x=\frac{b}{a}$ for queue represented by the bit of strings. We prove the theorem by showing an ordering between all bit strings, where the maximum value is $N$ and the minimum value is larger than $N-1$. The minimum value is when all buckets are nonempty (i.e., all ones).  In that case, $a =\sum_{i=1}^{N}2^i$ and  $b =\sum_{i=1}^{N}i2^i$. Note that $b$ is an Arithmetic-Geometric Progression that can be simplified to $N2^{N+1}-(2^{N+1}-2)$ and $a$ is a Geometric Progression that can be simplified to $2^{N+1}-2$. Hence, the critical point $x = \frac{N2^{N+1}}{2^{N+1}-2} - 1 = \frac{N}{1-2^{-N}} - 1$ where $\frac{N}{1-2^{-N}}<N+1$ and $ceil(x) = N$. The maximum value occurs when only bucket $N$ is nonempty (i.e., all zeros). It is straightforward to show that the critical point is exactly $x=N$. Now, consider any N-bit string, where the Nth bit is 1, if we flip one bit from 1 to zero, the value of the critical point increases. It is straight forward to show that $\frac{b-j2^{j}}{a-2^{j}} - \frac{b}{a} > 0$, where $j$ is the index of the flipped bit. 
\end{proof}

\section{Examples of Errors in Approximate Gradient Queue}
\label{app:example}
To better understand the effect of missing elements on the accuracy of the approximate queue, consider the following cases of elements distribution for a maximum priority queue with $N$ buckets: 
\begin{itemize}[leftmargin=*]
\item Elements are evenly distributed over the queue with frequency $1/\alpha$, which is equivalent to an Exact Gradient Queue with $N/\alpha$ elements,
\item $N/2$ elements are present in buckets from $0$ to $N/2$ and then a single element is present in bucket indexed $3N/4$, where the concentration of the elements at the beginning of the queue will create an error on the estimation of the index of the maximum element  $\epsilon = ceil(b/a) + u(\alpha) - 3N/4$. We note that in this case $\epsilon < 0$ because the estimation of $ceil(b/a)$ will be closer to the concentration of elements that is pulling the curvature away from $3N/4$. The error in such cases grows proportional to size of the concentration and inversely proportional to the distance between the low concentration and the high concentration.
\item All elements are present, which allows the value $\epsilon = ceil(b/a) + u(\alpha)$ to be exactly where the maximum element is.
\end{itemize}


\end{document}